\documentclass[acmsmall,screen,nonacm]{acmart}
\AtBeginDocument{%
  }

\usepackage{amsmath, amsfonts, amsthm}
\usepackage{mathpartir} 
\usepackage{stmaryrd}   
\usepackage[dvipsnames]{xcolor}
\usepackage{xspace}
\usepackage{subfiles}
\usepackage{cleveref}
\usepackage{pifont} 
\usepackage{listings}
\usepackage[frozencache]{minted} 
\usepackage{tikz} 
\usepackage{wrapfig}
\usepackage{enumitem}
\usepackage{float}

\usetikzlibrary{calc}

\setminted{
    mathescape,
    fontsize=\footnotesize,
    linenos,
    stepnumber=1,
    numbersep=5pt,
    xleftmargin=1.5em,
    tabsize=4,
}

\newcommand{\name}{MPFX\xspace}

\newcommand{\softfl}{SoftFloat\xspace}

\newcommand{\Z}{\mathbb{Z}\xspace}
\newcommand{\R}{\mathbb{R}\xspace}

\newcommand{\F}{{\mathbb{F}\xspace}}

\newcommand{\rnd}{{\text{rnd}\xspace}}

\newcommand{\A}{{\mathcal{A}\xspace}}

\newcommand{\ImplSLOC}{2,318\xspace}

\newcommand{\EvalOS}{Ubuntu 26.04 LTS\xspace}
\newcommand{\EvalMem}{512\xspace}
\newcommand{\EvalCPU}{AMD EPYC 7702P\xspace}
\newcommand{\EvalCC}{GCC 15.2.0\xspace}
\newcommand{\EvalMPFR}{4.2.2\xspace}
\newcommand{\EvalSoftfl}{Release 3e\xspace}
\newcommand{\EvalCPFloat}{v0.6.0\xspace}

\newcommand{\EvalCmpNumInputs}{10 million\xspace}
\newcommand{\EvalCmpNumIters}{30\xspace}
\newcommand{\EvalMaxSpeedup}{1.62\times\xspace}
\newcommand{\EvalMinSpeedup}{0.44\times\xspace}

\newcommand{\CaseStudyMismatches}{5.2\%\xspace}

\newcommand{\SpeedupOverMPFR}{{13.5\times\xspace}}

\newcommand{\SpeedupMeanOverMPFR}{{6.40\times\xspace}}
\newcommand{\SpeedupMeanOverSoftfloat}{{0.84\times\xspace}}
\newcommand{\SpeedupOverSoftfloatCaseStudy}{{13.3\times\xspace}}

\newtheorem{lemma}{Lemma}[section]

\crefname{lemma}{lemma}{lemmas}
\Crefname{lemma}{Lemma}{Lemmas}

\lstdefinestyle{CStyle}{
    language=C,
    basicstyle=\ttfamily\footnotesize,  
    keywordstyle=\color{blue},       
    commentstyle=\color{gray},       
    stringstyle=\color{red},         
    numbers=left,                    
    numberstyle=\tiny\color{gray},   
    stepnumber=1,                    
    showspaces=false,                
    showstringspaces=false,          
    showtabs=false,                  
    tabsize=4,                       
}

\lstdefinestyle{Pseudo}{
    language=Python,
    basicstyle=\ttfamily\footnotesize,  
    keywordstyle=\color{blue},       
    commentstyle=\color{gray},       
    stringstyle=\color{red},         
    numbers=left,                    
    numberstyle=\tiny\color{gray},   
    stepnumber=1,                    
    showspaces=false,                
    showstringspaces=false,          
    showtabs=false,                  
    tabsize=4,                       
    mathescape,                      
}

\newcommand{\hwop}[1]{{\textcolor{BurntOrange}{#1}}}   
\newcommand{\nlop}[1]{{\textcolor{Mulberry}{#1}}}      

\ifdefined\hideappendix
  \newcommand{\refRoundContain}{Section 5.1\xspace}
  \newcommand{\refRoundSafe}{Section 5.2\xspace}
  \newcommand{\refLemContainComplete}{Lemma 5.1\xspace}
  \newcommand{\refLemKeptDigits}{Lemma 5.2\xspace}
  \newcommand{\refLemKeptCmp}{Lemma 5.3\xspace}
  \newcommand{\refLemRtoRounding}{Lemma 5.4\xspace}
  \newcommand{\refTabDoubleRoundCex}{Table 2\xspace}
  \newcommand{\refAnalysisInfer}{Section 6.1\xspace}
  \newcommand{\refFigCorrectDoubleRound}{Figure 8\xspace}
\else
  \newcommand{\refRoundContain}{\Cref{ssec:round:contain}\xspace}
  \newcommand{\refRoundSafe}{\Cref{ssec:round:safe}\xspace}
  \newcommand{\refLemContainComplete}{\Cref{lem:contain-complete}\xspace}
  \newcommand{\refLemKeptDigits}{\Cref{lem:kept-digits}\xspace}
  \newcommand{\refLemKeptCmp}{\Cref{lem:kept-cmp}\xspace}
  \newcommand{\refLemRtoRounding}{\Cref{lem:rto-rounding}\xspace}
  \newcommand{\refTabDoubleRoundCex}{\Cref{fig:double-round-cex}\xspace}
  \newcommand{\refAnalysisInfer}{\Cref{ssec:analysis:infer}\xspace}
  \newcommand{\refFigCorrectDoubleRound}{\Cref{fig:correct-double-round}\xspace}
\fi

\begin{document}

\title{When Double Rounding is Correct}

\author{Brett Saiki}
\email{bsaiki@cs.uw.edu}
\orcid{0009-0002-3482-5767}
\affiliation{%
  \institution{University of Washington and Intel}
  \city{Seattle}
  \state{Washington}
  \country{USA}
}

\author{Bill Zorn}
\affiliation{%
  \institution{Intel}
  \city{Folsom}
  \state{CA}
  \country{USA}
}
\email{bill.zorn@intel.com}

\author{Cynthia Richey}
\email{lapwing@seas.upenn.edu}
\orcid{0009-0008-4456-9406}
\affiliation{%
  \institution{University of Pennsylvania}
  \city{Philadelphia}
  \state{PA}
  \country{USA}
}

\author{Zachary Tatlock}
\affiliation{%
  \institution{University of Washington}
  \city{Seattle}
  \state{WA}
  \country{USA}
}
\orcid{0000-0002-4731-0124}
\email{ztatlock@cs.uw.edu}

\renewcommand{\shortauthors}{Saiki et al.}

\begin{abstract}


Number libraries, which implement number formats and their arithmetic in software, are
  used to simulate hardware, making hardware behavior reproducible and thus
  verifiable and easier to design.
However, modern machine learning accelerators are difficult to simulate, as they increasingly
  use specialized number formats that existing number libraries do not support.
It is costly to extend these libraries using traditional methods:
  each combination of operation, format, and rounding mode typically requires a 
  bespoke implementation.
An easier way is to 
\emph{double round}---%
  compute at higher precision, then re-round---%
  so that one high-precision kernel serves many formats, but 
  double rounding is only known to be correct in select cases.

We give sufficient conditions for when double rounding
  is correct across many formats and rounding modes.
Our result rests on an abstract number format that parameterizes
  fixed- and floating-point formats,
  reducing correct double rounding to \emph{format containment}---%
  whether every value of one format is representable in another---%
  for which we give an efficient, complete check on the formats' parameters.
These results are mechanized in Lean~4.
In addition,
  we present a methodology, built on \emph{format inference},
  for determining when roundings can be dropped, inserted,
  or modified without changing a program's result.

We apply these insights in two ways.
First,
  we present \name, a correctly-rounded, multi-precision number library
  that exploits correct double rounding to efficiently simulate
  many number formats.
\name's correctly-rounded operations
  are up to $\SpeedupOverMPFR$ (mean: $\SpeedupMeanOverMPFR$)
  faster than MPFR's,
  and competitive with SoftFloat's---between $\EvalMinSpeedup$ and
  $\EvalMaxSpeedup$ as fast (mean: $\SpeedupMeanOverSoftfloat$)---%
  at the same precision.
Second,
  in a case study,
  we implement a software simulation of a hardware specification
  using both SoftFloat and \name's primitives;
  the \name version is up to $\SpeedupOverSoftfloatCaseStudy$ faster.

\end{abstract}

\maketitle

\section{Introduction}
\label{sec:intro}

The growing importance of machine learning workloads
  has driven a rapid proliferation of new \textit{number formats}~\cite{bfloat16, ocp-mx, p3109, graphcore-fp8},
  including 8-, 6-, and even 4-bit floating-point formats, which 
  improve performance and energy efficiency at the cost of accuracy.
Fixed-function hardware units (e.g., tensor cores) consume these
  small formats and perform sequences of operations on them internally,
  using exact or high-precision arithmetic for intermediate computations.
A developer exploring the space of new hardware designs (or
  reproducing or verifying existing ones) faces a challenge:
  general-purpose CPUs do not support many of these formats,
  requiring a software simulation for each new candidate design
  that reproduces its behavior exactly.

One way to emulate unusual number formats is to use a 
  \emph{number library}~\cite{softfloat, softposit, mpfr, universal, floppyfloat}. 
A number library simulates number formats using the available numerical capabilities 
  of the programming environment---for example, the Berkeley SoftFloat library~\cite{softfloat}
  implements IEEE 754 floating-point arithmetic using only machine integer operations. 
As trusted reference implementations, 
  number libraries are the backbone of tools including
  floating-point expression rewriters~\cite{herbie, chassis},
  math library generators~\cite{metalibm,megalibm,sollya,rlibm},
  and error analyzers~\cite{gappa,vcfloat,vcfloat2,fptaylor}.

However,
  most number libraries support operations over a limited set of number formats,
  and extending these operations to new formats requires substantial expertise,
  careful design, and extensive validation~\cite{correct-round}.
Since a format is tightly coupled
  with its arithmetic and rounding behavior,
  a correct implementation of one format does not transfer to another:
  developers must construct a new implementation
  for \textit{every combination} of format, operation,
  and rounding mode.

\textit{Double rounding} is an alternative approach to building number libraries
  which relies on the high-precision arithmetic available on most modern
  processors (e.g., IEEE 754 double-precision floating-point).
The idea is simple:
  to simulate a low-precision format,
  compute the high-precision result,
  then re-round to the target format.
By \emph{separating arithmetic from rounding},
  a single high-precision kernel efficiently serves many number formats.
But this works only when double rounding is \textit{correct}---that is,
  when it yields the same result as rounding directly
  to the low-precision format.
Prior work establishes this
  for particular pairs of rounding modes~\cite{boldo-rto, flocq, rlibm-all}
  or particular operations~\cite{figueroa, rump-double-round, roux-double-round, cpfloat},
  but the question remains: when is double rounding correct \textit{in general}?

To answer this question, 
  we introduce an \textit{abstract number format} that
  captures both floating-point and fixed-point number formats;
  its parameterization permits an efficient, complete check for determining
  \emph{format containment}---when
  every representable value in one number format
  is representable in another format.
We show that correct double rounding
  reduces to a format containment check:
  it suffices that the intermediate format contain the target format,
  possibly with a few extra bits of precision
  (depending on the rounding modes involved). 
Our results extend prior work~\cite{boldo-rto, flocq, rlibm-all}
  to more rounding modes, to formats with a maximum value,
  and to special values and overflow.
We mechanize all of these results in Lean~4~\cite{lean4}.

Beyond guaranteeing that double rounding is correct for individual operations,
  we show how our abstract number format lets us map a whole real-valued
  specification onto the finite operations available in a programming environment.
In practice,
  machine learning accelerators rarely operate
  on small formats end-to-end; instead,
  these formats serve as inputs to longer computations,
  with intermediate steps using exact or high-precision arithmetic.
A dot product over 8-bit OCP MX formats~\cite{ocp-mx}, for example,
  accumulates into an intermediate value requiring
  far more precision than its small inputs would suggest.
We develop a methodology built on \textit{format inference},
  which bounds each intermediate computation by a format
  covering its full range of values;
  with format containment and correct double rounding,
  it justifies semantics-preserving rewrites that map
  unrounded or high-precision operations
  onto those available in the programming environment.

We apply our theory in two ways.
First, we build \name,
  a \emph{correctly-rounded}, \emph{multi-precision}
  number library in C++ that is correct for any instance of our
  abstract number format whose required intermediate (\Cref{sec:round})
  fits in the IEEE 754 double-precision floating-point format.
It handles every format by correctly re-rounding a high-precision result,
  rather than through the per-format bespoke implementations
  that number libraries traditionally require.
\name builds on four arithmetic backends---hardware
  floating-point, two existing number libraries, and custom
  kernels based on error-free transformations (EFTs)~\cite{eft}.
It is efficient:
  its correctly-rounded operations run up to
  $\SpeedupOverMPFR$ (mean: $\SpeedupMeanOverMPFR$) faster than MPFR,
  and are competitive with SoftFloat---between $\EvalMinSpeedup$ and
  $\EvalMaxSpeedup$ as fast (mean: $\SpeedupMeanOverSoftfloat$).
We compare to SoftFloat only on the formats SoftFloat is highly optimized 
  for (e.g., single-precision floating point);
  \name remains competitive on those formats while supporting many more.
Second, in a case study, we use format inference and correct double
  rounding to implement an algorithm from the OCP MX standard~\cite{ocp-mx},
  first with SoftFloat and then with \name;
  our \name-based program is up to $\SpeedupOverSoftfloatCaseStudy$ faster
  than the SoftFloat version.

Concretely, this paper makes the following contributions:
\begin{itemize}
\item We present an \emph{abstract number format} (\Cref{sec:format})
  that parameterizes fixed- and floating-point number formats
  and permits an efficient and complete check for
  \emph{format containment} (\Cref{ssec:round:contain}).
\item We characterize when \emph{double rounding is correct},
  expressing sufficient conditions as format containment,
  and when it is incorrect no matter how wide the intermediate format
  (\Cref{ssec:round:safe}).
\item We present a methodology, built on \emph{format inference},
  for determining when dropping, inserting, or modifying
  a rounding is \emph{semantics-preserving}
  (\Cref{sec:analysis}).
\end{itemize}

Subsequently,
  \Cref{sec:eval} introduces \name and evaluates its performance,
  and \Cref{sec:case_study} presents our case study.
\Cref{sec:future} describes limitations and future work;
  \Cref{sec:related} discusses related work;
  \Cref{sec:conclusion} summarizes our conclusions.
Our Lean mechanization is included as supplementary material.
We transcribed the paper's definitions and theorem statements into Lean by hand,
  then used large language models (Claude Opus~4.7 and~4.8)
  to mechanize the prose proofs of Appendix~A.

\section{Overview}
\label{sec:overview}

The OCP MX standard~\cite{ocp-mx} defines small number formats, as well as
  algorithms for efficient training and inference.
These algorithms describe computations at a high level,
  but leave specific implementation details (e.g., the number
  formats and rounding modes involved) to the implementer.

\Cref{fig:overview:mx} outlines an MX algorithm for 
  computing a dot product over \textit{blocks}, which 
  is the fundamental unit of the standard---essentially 
  a vector $\{ V_i \}$ of numbers sharing a common 
  \textit{scale factor}, \textit{X}, where each
  number in the block is represented as $V_i = X \times P_i$.
The dataflow diagram (right), which represents one iteration of the loop, 
  is annotated with question
  marks where concrete choices of number format and rounding
  mode must be made before the behavior of the algorithm on hardware
  is fully specified.
Each way of resolving these choices yields a different hardware design,
  trading numerical accuracy for cost in chip area and power\footnote{
    Selecting these formats and precisions is
      \emph{precision tuning}~\cite{precimonious, fptuner, daisy, popix, isychev-lohar};
      we instead ask how to simulate even
      a \emph{single} candidate selection
      faithfully and efficiently.
  }.

\begin{figure}[ht!]
\noindent
\begin{minipage}[t]{0.49\textwidth}
\vspace{0pt}
\begin{minted}[escapeinside=||]{python}
# dot product over MX blocks
def mx_dot(|$A$|, |$B$|):
  |$C$| = 0
  for (|$X^{A}_{i}$|, |$P^{A}_{i}$|), (|$X^{B}_{i}$|, |$P^{B}_{i}$|) in zip(|$A$|, |$B$|):
    |$C$| += |$X^{A}_{i} \times X^{B}_{i} \times \sum_{j=1}^{32}\left( P^{A}_{i,j} \times P^{B}_{i,j}\right)$|
  return |$C$|
\end{minted}
\end{minipage}%
\hfill
\begin{minipage}[t]{0.49\textwidth}
\centering
\vspace{0pt}
\includegraphics[width=\textwidth]{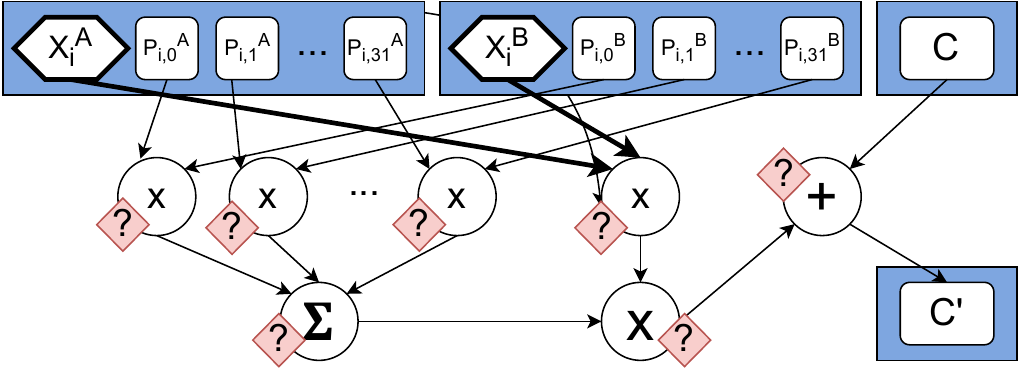}
\end{minipage}
\caption{%
A dot product over MX blocks (left)
  and the dataflow diagram for a single loop iteration (right),
  which combines one block from $A$ and one from $B$ with the running
  accumulator $C$ to produce the updated $C$.
Both $A$ and $B$ are encoded as MX blocks
  where each block has a scale factor $X_i$
  and elements $\{ P_{i,j} \}$.
}
\label{fig:overview:mx}
\end{figure}

Consider an engineer modeling candidate accelerators in software.
Before committing a design to silicon,
  they must check that their design choices yield acceptable numerical behavior when
  running training or inference.
\Cref{fig:overview:mx_spec} (left) describes a candidate
  that performs as few roundings as possible.
It uses a single input format---say, the 8-bit
  MX format E5M2---and performs all intermediate computations
  \textit{exactly}, rounding only at the final accumulation.


\begin{figure}[ht!]
\noindent
\begin{minipage}[t]{0.44\textwidth}
\begin{minted}[escapeinside=||]{Python}
def accurate_mx_dot(|$A$|, |$B$|):
  |$C$| = 0
  for (|$X^{A}_i$|, |$P^{A}_i$|), (|$X^{B}_i$|, |$P^{B}_i$|) in zip(|$A$|, |$B$|):
    |$X$| = |$\text{mul}_{\R,\_}(X^{A}_i, X^{B}_i)$|
    |$V$| = 0
    for |$P^{A}_{i,j}$|, |$P^{B}_{i,j}$| in zip(|$P^{A}_i$|, |$P^{B}_i$|):
      |$t_1$| = |$\text{mul}_{\R,\_}(P^{A}_{i,j}, P^{B}_{i,j})$|
      |$V$| = |$\text{add}_{\R,\_}(V, t_1)$|
   |$t_2$| = |$\text{mul}_{\R,\_}(X, V)$|
   |$C$| = |$\text{add}_{\text{FP32},\text{RNE}}(C, t_2)$|
  return |$C$|
\end{minted}
\end{minipage}%
\hfill
\begin{minipage}[t]{0.54\textwidth}
\begin{minted}[escapeinside=||]{Python}
def our_mx_dot_mpfx(|$A$|, |$B$|):
  |$C$| = 0
  for (|$X^{A}_i$|, |$P^{A}_i$|), (|$X^{B}_i$|, |$P^{B}_i$|) in zip(|$A$|, |$B$|):
    |$X$| = |$\hwop{\text{mul}_{\text{FP64},\text{RNE}}}(\hwop{\text{rnd}_{\text{FP64},\text{RNE}}}(X^{A}_i), \hwop{\text{rnd}_{\text{FP64},\text{RNE}}}(X^{B}_i))$|
    |$V$| = 0
    for |$P^{A}_{i,j}$|, |$P^{B}_{i,j}$| in zip(|$P^{A}_i$|, |$P^{B}_i$|):
      |$t$| = |$\hwop{\text{mul}_{\text{FP32},\text{RNE}}}(\hwop{\text{rnd}_{\text{FP32},\text{RNE}}}(P^{A}_{i,j}), \hwop{\text{rnd}_{\text{FP32},\text{RNE}}}(P^{B}_{i,j}))$|
      |$V$| = |$\hwop{\text{add}_{\A(\infty,-32,2^{95} - 2^{-32}),\text{RNE}}}(V, \nlop{\text{rnd}_{\A(\infty,-32,2^{95} - 2^{-32}),\text{RNE}}}(t))$|
    |$V_{hi}, V_{lo}$| = |$\nlop{\text{split}}(V, 40)$|
    |$t_2$| = |$\hwop{\text{mul}_{\text{FP64},\text{RNE}}}(X, \nlop{\text{rnd}_{\text{FP64},\text{RNE}}}(V_{hi}))$|
    |$t_3$| = |$\hwop{\text{mul}_{\text{FP64},\text{RNE}}}(X, \nlop{\text{rnd}_{\text{FP64},\text{RNE}}}(V_{lo}))$|
    |$C$| = |$\nlop{\text{add3}_{\text{FP32},\text{RNE}}}(t_2, t_3, C)$|
  return |$C$|
\end{minted}
\end{minipage}
\caption{%
The MX dot product:
  a fully specified algorithm (left);
  and its efficient \name implementation (right),
  using \nlop{\name} or \hwop{hardware-supported} operations,
  derived in \Cref{sec:case_study}.
Each operator $f_{F,rm}$ is annotated with
  its number format $F$ and rounding mode $rm$
  (an unrounded operation is denoted $f_{\R, \_}$).
}
\label{fig:overview:mx_spec}
\end{figure}


To model this candidate faithfully in software,
  the engineer faces an immediate obstacle:
  its operations do not exist in the programming environment.
The specification computes with infinitely-precise operations,
  whereas the environment provides only
  rounded operations over a handful of formats.
Faithfully simulating the design therefore means rewriting
  the program to replace each of its operations with
  one or more available ones.
This requires knowing what format each intermediate value inhabits,
  and when such a replacement is \emph{semantics-preserving}.
\Cref{sec:analysis} shows how to accomplish this
  using \emph{format inference} and
  correct double rounding (\Cref{sec:round}).



As it turns out, this candidate is impractical:
  despite its small inputs,
  the inner accumulator of the dot product in \Cref{fig:overview:mx_spec}
  (left, line 8) requires 69 bits of precision---well beyond double precision (53 bits).
A hardware architect is unlikely to build such an accumulator;
  they would accept some rounding error and narrow it to a more manageable
  width---say, 20 to 50 bits.
But the accumulator is only one of the operations
  left unspecified by \Cref{fig:overview:mx}:
  each has a number format and rounding mode to choose,
  and each combination is its own candidate.
Today, exploring this space means constructing a faithful model
  for each candidate by hand, reasoning about its formats,
  and validating it individually.
Established number libraries offer little relief:
  each supports only a fixed handful of formats,
  so any candidate outside them requires another bespoke implementation.


Our number library, \name, is built with flexibility in mind:
  it is correct for any instance of our abstract number format (\Cref{sec:format})
  whose required intermediate fits in double precision,
  with performance comparable to efficient number libraries
  like SoftFloat~\cite{softfloat} and FloppyFloat~\cite{floppyfloat}.
Rather than implementing each candidate from scratch,
  \name exploits cases where double rounding is correct,
  reusing a few high-precision kernels across many formats by re-rounding their results;
  we discuss the relevant theory in \Cref{sec:round}.

In \Cref{sec:case_study},
  we map the specification in
  \Cref{fig:overview:mx_spec} (left) to SoftFloat's primitives,
  and show that targeting \name yields a more efficient
  implementation (right).
Crucially,
  both implementations are \emph{semantically equivalent} to the specification,
  yet they use only the operations the programming environment provides.
The rest of this paper addresses the following central problem:
  \emph{Given a numerical specification,
  how do we construct a faithful yet efficient implementation?}



\section{What is a Number Library?}
\label{sec:bg}

A \textit{number library} simulates number formats---%
  floating-point, fixed-point, posits, and more---%
  beyond those offered by standard hardware or language runtimes.
Number libraries are essential tools for analyzing numerical error in 
  multi-precision numerical algorithms, exploring different 
  implementation trade-offs, and verifying the correctness of 
  numerical software and hardware.
Despite the diversity of number formats they support,
  number libraries share common ideas:
  \textit{number formats} define which values are representable (\Cref{ssec:bg:formats}),
  and \textit{rounding} specifies how to approximate unrepresentable values (\Cref{ssec:bg:rounding}).
Ideally,
  rounding satisfies a correctness property (\Cref{ssec:bg:correct-round}),
  which extends to compositions of roundings,
  or \textit{double roundings} (\Cref{ssec:bg:double-round}).
We close the section with a survey of
  existing number libraries (\Cref{ssec:bg:survey}).

\subsection{Number Formats}
\label{ssec:bg:formats}

A \emph{number format} defines
  a discrete set of values that approximate the real numbers. 
Examples of number formats include
  machine integers (e.g., 32-bit signed integers);
  fixed-point numbers;
  floating-point numbers~\cite{ieee754, ocp-mx, p3109}
  (e.g., IEEE 754 single-precision floating-point numbers, OCP MX floating-point numbers);
  posit numbers~\cite{posit};
  and logarithmic numbers~\cite{lns}.
In practice,
  number formats are accompanied by
  a finite \emph{encoding}  that maps
  between bit patterns and numerical values
  and possibly non-real values such as
  infinities or NaNs.

In this paper,
  we restrict our attention to \emph{dyadic numbers},
  or numbers of the form
  ${(-1)}^{s} \times c \times 2^{exp}$ where
  $s \in \{ 0, 1 \}$ is the \emph{sign},
  $c \in \Z_{\geq 0}$ is the \emph{significand},
  and $exp \in \Z$ is the \emph{exponent}.
The set of dyadic numbers is denoted $\Z\left[\frac{1}{2}\right]$.
We will consider number formats
  that are subsets of the dyadic numbers\footnote{%
  The restriction to dyadic numbers excludes
    more esoteric formats like logarithmic number systems.};
  this covers the common number formats:
  floating-point, fixed-point, and integer number formats.
A number format may also contain \emph{special values},
  which indicate an extremal ($\pm\infty$) or undefined (NaN) result.

We denote the set of all possible number formats with $\F$.
Notice that our representation of
  dyadic numbers is redundant since
  $c \times 2^{exp} = 2c \times 2^{exp - 1}$; that is, 
  $3 \times 2^2 = 6 \times 2^1 = 12 \times 2^0$ are all valid
  representations of the dyadic number 12. Let $F \in \F$ be a number format.
A dyadic number $x \in \Z\left[\frac{1}{2}\right]$
 is \emph{representable} in $F$ when $x \in F$.
If $x = {(-1)}^s \times c \times 2^{exp}$,
  the precision of $x$ is $\lceil \log_2(c + 1) \rceil$,
  which is the fewest binary digits that can encode $c$.
For $y_1, y_2 \in F$,
  we say that $y_1$ and $y_2$ are \emph{adjacent}
  whenever there exists no intervening representable
  number.

\subsection{Rounding}
\label{ssec:bg:rounding}

For any number format,
  there is a gap between any two adjacent representable values,
  a range of real numbers that are unrepresentable in the format.
If we want to convert a real number
  in this gap to a number in the target format,
  we must \textit{round} it to either the lesser
  or greater value.
The choice between these options
  is governed by the \textit{rounding mode}.

A rounding function $\rnd_{F, rm}$
  maps every real number $x \in \R$ to
  a representable value in
  the number format $F \in \F$.
If $x$ is representable in $F$,
  then $\rnd_{F, rm}(x) = x$.
Otherwise,
  there exist two adjacent representable values
  $y_1, y_2 \in F$ such that $y_1 < x < y_2$\footnote{%
  So far, we assume $F$ has no maximum representable value
    or special values, so $y_1$ and $y_2$ always exist.};
  the rounding mode $rm$ decides whether
  $\rnd_{F, rm}(x) = y_1$ or $\rnd_{F, rm}(x) = y_2$.
We classify representable values as \emph{even} or \emph{odd}
  so that, for two adjacent representable values $y_1$ and $y_2$,
  one is always even and the other odd\footnote{%
  We usually take the parity of the integer significand
    in a canonical, format-dependent representation;
    when $p = 1$, however, every significand is $c = 1$,
    so we assign parity directly to keep values alternating, with $0$ even.
  This definition agrees with the treatment of oddness for
    1-bit formats in the IEEE P3109 draft standard~\cite{p3109}.}.



The IEEE 754 standard~\cite{ieee754} defines five rounding modes:
  two that round to the \textit{nearest} value (and differ in their tiebreaking behavior),
  and three that round in a fixed direction.
The nearest rounding modes choose
  the representable value that is closest to $x$.
When $x$ is exactly halfway between $y_1$ and $y_2$,
  a tiebreaking rule chooses which representable value
  to produce:
\begin{itemize}
\item \emph{round to nearest, ties to even} (RNE):
  tiebreak to the even value;
\item \emph{round to nearest, ties away from zero} (RNA):
  tiebreak to the value with the larger magnitude.
\end{itemize}
The directed rounding modes choose
  the representable value irrespective
  of distance:
\begin{itemize}
\item \emph{round towards positive infinity} (RTP):
  choose the value closest to $+\infty$;
\item \emph{round towards negative infinity} (RTN):
  choose the value closest to $-\infty$;
\item \emph{round towards zero} (RTZ):
  choose the value with the smaller magnitude.
\end{itemize}
For this paper,
  we consider two other rounding modes:
\begin{itemize}
\item \emph{round away from zero} (RAZ):
  choose the value with the larger magnitude;
\item \emph{round to odd} (RTO):
  choose the odd value.
\end{itemize}
The RTZ mode is the obvious counterpart to RAZ.
The RTO mode is a subject of much research interest
  in recent decades~\cite{boldo-rto, boldo-rto-apps, rlibm}.
For this paper,
  we will focus on RNE, RTZ, RAZ, and RTO
  since the specific tiebreaking rule of
  RNE and RNA has no effect in our analysis,
  and RTP acts as RAZ when $x \geq 0$ and as RTZ otherwise,
  with RTN the opposite.

In practice,
  number formats have a maximum representable value $b$.
Let $\hat{F}$ be $F$ with no maximum representable value or special values.
Given functions
  $\sigma : \{+\infty, -\infty, \mathrm{NaN}\} \to F$
  and $\omega : \{-1, +1\} \to F$,
  rounding to $F$ is
\[
\rnd^{\sigma, \omega}_{F, rm}(x) =
\begin{cases}
  \sigma(x) & \text{if } x \in \{+\infty, -\infty, \mathrm{NaN}\}, \\
  \omega(\mathrm{sgn}(y)) & \text{if } |y| > b, \\
  y & \text{otherwise},
\end{cases}
\qquad \text{where } y = \rnd_{\hat{F}, rm}(x).
\]
We say $x$ \emph{overflows} when $|y| > b$.
We omit $\sigma$ and $\omega$ when clear from context.
IEEE 754~\cite{ieee754} sets $\sigma$ to the identity
  and $\omega(s)$ to $s \cdot \infty$ or $s \cdot b$,
  depending on the rounding mode and $s$;
  OCP MX~\cite{ocp-mx-spec} may instead saturate,
  with $\sigma(\pm\infty) = \omega(\pm 1) = \pm b$.


\subsection{Correct Rounding}
\label{ssec:bg:correct-round}

So far, we have discussed rounding individual real numbers
  to a target number format.
However,
  rounding often occurs as part of a computation:
  when we add two floating-point numbers,
  their sum may not be representable.
We want a guarantee that numerical operations give
  the same result as performing the exact computation,
  \textit{then} rounding.

Let $f: \R^n \to \R$ be a function.
An implementation of $f$
  for a number format $F$ and rounding mode $rm$,
  denoted $f_{F, rm}$, is \emph{correctly-rounded}
  if it produces the rounded value of the exact result of $f$.
More formally,
  $f_{F, rm}$ is the composition of
  the exact function $f$ and a rounding
  function $\rnd_{F, rm}$:
\[
  f_{F, rm} = \rnd_{F, rm} \circ f.
\]

Few languages and libraries offer comprehensive correct rounding.
The IEEE 754 standard requires only that
  basic arithmetic (e.g., addition) be correctly rounded;
  in most languages,
  implementations of transcendental functions---%
  for example, the exponent function $e^x$---are not guaranteed
  to be correctly rounded.
This lack of clear specification means that
  the results of numerical algorithms may vary across operating systems
  and languages, leading to higher than expected error.
Requiring correct rounding in
  programming environments has strong benefits~\cite{correct-round}:
  it provides a clear specification for the behavior of $f_{F, rm}$;
  it ensures reproducible results across
  different implementations of $f_{F, rm}$;
  and it bounds the numerical error introduced
  by rounding consistently.

\subsection{Double Rounding}
\label{ssec:bg:double-round}


A \emph{double rounding} is the composition of two rounding functions:
  for number formats $F_1, F_2$ and rounding modes $rm_1, rm_2$,                                                                                                                        
  a double rounding has the form
  $\rnd_{F_1, rm_1} \circ \rnd_{F_2, rm_2}$.
A double rounding is \emph{correct}
  when the intermediate rounding has no effect on the final result:
\[
\rnd_{F_1, rm_1} = \rnd_{F_1, rm_1} \circ \rnd_{F_2, rm_2}.
\]
Correct double rounding lets us safely re-round
  an already-rounded computation.
Composing both sides of the identity with $f$ yields
\[
\rnd_{F_1, rm_1} \circ f = \rnd_{F_1, rm_1} \circ \rnd_{F_2, rm_2} \circ f,
\]
  that is, $f_{F_1, rm_1} = \rnd_{F_1, rm_1} \circ f_{F_2, rm_2}$:
  re-rounding the correctly-rounded result for $F_2$
  produces the correctly-rounded result for $F_1$,
  for \emph{any} function $f$.

Unfortunately,
  double roundings are not always correct roundings.
For example,
  let $x = 1.26$ be a rational number,
  $F_1$ be the E2M1 floating-point format from the OCP MX standard~\cite{ocp-mx}
  and $F_2$ be the E4M3 floating-point format from the same standard.
If $rm_1 = rm_2 = \text{RNE}$, then
\[
\rnd_{F_1, rm_1}(x) = 1.5,
\qquad
\rnd_{F_2, rm_2}(x) = 1.25,
\qquad\text{and}\qquad
\rnd_{F_1, rm_1}(\rnd_{F_2, rm_2}(x)) = 1,
\]
  so we have a clear counterexample.
If instead $rm_1 = rm_2 = \text{RTZ}$,
  both expressions evaluate to 1.


\subsection{A Brief Survey of Number Libraries}
\label{ssec:bg:survey}

Number libraries package software implementations of
  number formats, rounding modes, and the operations defined over them.
Each library must decide which formats and
  rounding modes to implement, as well as
  the correctness guarantees they provide.
We survey three \emph{correctly-rounded} libraries that make different
  choices across these dimensions.

The Berkeley SoftFloat~\cite{softfloat} library
  models IEEE 754 arithmetic using only
  machine integer operations.
It implements the required arithmetic
  for the four named IEEE 754 formats\footnote{%
  SoftFloat also supports 80-bit floating-point from
  the x86 architecture.}.

The FloppyFloat~\cite{floppyfloat} library
  simulates floating-point arithmetic using existing
  floating-point arithmetic for superior performance
  compared to SoftFloat.
It models other rounding modes using techniques such
  as error-free transformations (EFTs)~\cite{eft},
  requiring only the host system's floating-point
  operations under the default rounding mode (RNE).


Lastly,
  the MPFR~\cite{mpfr} library targets 
  numerical analysis applications, modeling a more
  mathematical form of floating-point arithmetic
  with no strict correspondence to hardware floating point.
The library supports \emph{arbitrary-precision} floating point,
  allowing the user to set the number of binary digits
  at the cost of requiring a memory allocator.
It provides a comprehensive set of correctly-rounded operations,
  including arithmetic, algebraic, and transcendental functions.

\section{Unifying Number Formats}
\label{sec:format}

In this section,
  we introduce an \emph{abstract number format}
  that encompasses both floating-point and fixed-point formats. 
Our key observation is that common number formats are not arbitrary
  sets of dyadic numbers; they exhibit significantly more structure. 
We exploit this structure by modeling
  a format's representable values as the output of parameterized 
  generators (\Cref{ssec:format:generators}), from which the 
  abstract number format follows naturally (\Cref{ssec:format:abstract}). 
In \Cref{sec:round}, 
  we will see how this abstraction allows us to efficiently 
  reason about format containment and correct double rounding.

\subsection{Generating Number Formats}
\label{ssec:format:generators}


To develop a unified abstract number format,
  we first construct \textit{generators}
  for fixed-point and floating-point formats
  that produce all the \emph{numerical} values in those formats,
  and then observe that these generators are similar.
To start,
  we will consider formats that have no maximum value
  and therefore contain countably infinite representable values,
  but we will later impose a bound.
We focus solely on the representable values themselves,
  not on their underlying encodings\footnote{%
  An encoding specifies a conversion between
  representable values and bit strings.
  }.

In \Cref{ssec:format:abstract},
  we discuss special values like infinity and NaN.
Our formats have a single zero;
  we do not model signed zero.






\subsubsection*{Fixed-Point Number Formats}

Fixed-point formats have
  representable values that are equidistant.
In particular,
  the distance between adjacent representable values,
  also called the \emph{quantum}, is a power
  of two\footnote{%
  The quantum could be a fixed multiple of a power
    of two, e.g., $\alpha \times 2^{exp}$ for some $\alpha \in \R$,
    but we set $\alpha = 1$ for simplicity.
  }: $q = 2^{exp}$.
The integers are a familiar instance
  with a quantum of $q = 1$.
The process for generating
  the fixed-point format with quantum $q$
  is straightforward.
First,
  we set $v = 0$ and produce 0.
Then,
  we repeatedly increase $v$ by $q$,
  producing $+v$ and $-v$ at each step.
\Cref{fig:format:fixed} shows
  the result of this procedure:
  an approximation of the real line by
  a set of evenly-spaced, dyadic numbers
  with spacing $q = 2^{-2}$.
The crucial property of this generator
  is that every value it produces
  has the form $\pm c \times 2^{exp}$.


\begin{figure}[ht!]
\includegraphics[width=\textwidth]{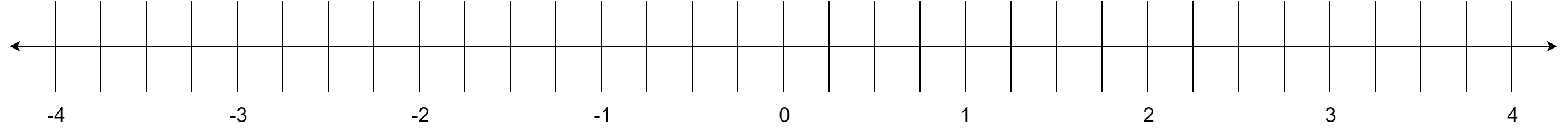}
\caption{%
Representable values on the real line for
  a fixed-point format with quantum $q = 2^{-2}$.
Fixed-point formats have representable
  values that are evenly spaced.
}
\label{fig:format:fixed}
\end{figure}

\subsubsection*{Floating-Point Number Formats}

A floating-point number format
  arises naturally by \emph{filtering} for values
  in a fixed-point format that satisfy
  a \emph{maximum precision} constraint:
  each representable value must have
  some form ${(-1)}^s \times c \times 2^{exp}$
  where $c$ has precision at most $p \in \Z_{\geq 1}$.

The procedure for generating representable floating-point values
  begins in an identical way to the fixed-point format;
  we repeatedly select values exactly
  $q_{\text{min}} = 2^{exp_{\text{min}}}$ apart,
  starting from 0.
Eventually,
  we encounter a number that, at first glance,
  seems to violate the precision constraint:
  $\pm 2^p \times 2^{exp_{\text{min}}}$ has
  a $(p + 1)$-digit significand.
Fortunately,
  we can take advantage of our redundant representation
  of dyadic numbers; we can represent the number with
  $\pm 2^{p-1} \times 2^{exp_{\text{min}}+1}$ instead,
  which has a $p$-digit significand.
However,
  its successor has no such luck:
  $\pm (2^p + 1) \times 2^{exp_{\text{min}}}$ can be represented
  with, at minimum, a $(p + 1)$-digit significand.
The generator filters out this value
  and then steps again with the quantum $q_{\text{min}}$.
Repeating this generate-and-test procedure
  constructs the representable values of
  the floating-point number format with initial quantum
  $q_{\text{min}}$ and maximum precision $p$.
\Cref{fig:format:float} shows the representable
  values of this number format.

\begin{figure}[ht!]
\includegraphics[width=\textwidth]{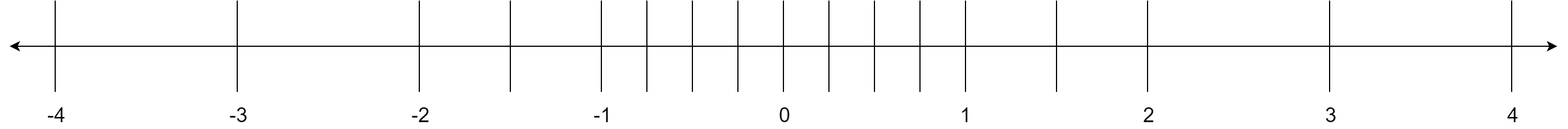}
\caption{%
Representable values on the real line for
  a floating-point format with an initial quantum
  $q = 2^{-2}$ and a maximum precision of $p = 2$.
}
\label{fig:format:float}
\end{figure}

At first,
  it may be surprising that this process generates
  the representable values of a floating-point format.
Examining the number line reveals
  key properties of floating-point numbers.
For $|x| \geq 1$,
  the spacing between representable values doubles
  after each power of two,
  and the number of representable values between
  powers of two $[2^e, 2^{e+1})$ is constant---%
  in this example, the number is 2.
For $|x| < 1$,
  the spacing is constant,
  a property called gradual underflow\footnote{%
  A nonzero representable value is \emph{subnormal} if
    $|x| < 2^{exp_{\text{min}} + p - 1}$ and \emph{normal} otherwise;
    in this example, $\pm 1/4$ is subnormal and every
    $|x| \geq 1/2$ is normal.
  }.
Importantly,
  since each floating-point format is a subset
  of some fixed-point format, the two number format families
  share the following property:
  \emph{every representable value is an integer multiple
  of the (initial) quantum}.


The aforementioned generator is inefficient;
  the acceptance rate halves whenever $v$
  crosses a power of two,
  eventually converging to zero.
We can make a better generator that
  produces values without failure.
Let
  $v_{\text{cut}} = 2^{p-1} \times 2^{exp_{\text{min}} + 1}
  = 2^{exp_{\text{min}} + p}$
  be the \emph{cutoff} after which the distance between
  representable values increases---%
  for example, $v_{\text{cut}} = 1$ in \Cref{fig:format:float}.
At $v_{\text{cut}}$,
  the generator doubles the quantum to $q = 2^{exp_{\text{min}} + 1}$
  and steps with the new quantum.
Each subsequent value is representable
  with the form $\pm c \times 2^{exp_{\text{min}} + 1}$
  until we encounter the next power of two,
  $2v_{\text{cut}} = \pm 2^{(p-1)} \times 2^{exp_{\text{min}}+2}$,
  at which point, the quantum doubles again.
This procedure continues indefinitely
  with the quantum doubling whenever
  the generator reaches the next power of two.

\subsubsection*{Bounded Formats}

All the number formats described so far in this section
  represent an infinite set of dyadic numbers.
In practice, however,
  most number formats are designed to be encodable in
  a fixed-length bit string, which necessarily limits
  them to a finite number of representable values.
For example,
  the IEEE 754 standard explicitly defines
  floating-point formats that are encodable in
  32 bits (binary32) and 64 bits (binary64).
Because the set of representable values is finite,
  there must exist a largest representable magnitude.

Given a maximum representable value $b$,
  all representable values of a number format
  satisfy $|x| \leq b$.
However,
  some number formats are not symmetric:
  the largest positive and negative representable values
  may differ in magnitude.
For example,
  machine integer types such as C's \texttt{int}
  are asymmetric.
In such cases,
  separate bounds must be specified for
  positive and negative values.
If $b^{+}$ and $b^{-}$
  denote the positive and negative bounds,
  then $b^{-} \leq x \leq b^{+}$.

Rounding to a bounded format
  proceeds as in \Cref{ssec:bg:rounding}:
  we round without a maximum representable value, producing $y$;
  if $|y| > b$, then $\rnd_{F, rm}(x) = \omega(\mathrm{sgn}(y))$.

\begin{figure}[t!]
\begin{minted}[escapeinside=||]{Python}
def generate(|$p$|, |$q$|, |$b$|):
  assert |$p \in \Z_{\geq 1}$| or |$p == \infty$| # either floating point or fixed point
  assert is_power_of_two(|$q$|)
  |$v_{cut}$| = |$q \times 2^p$| # cutoff where the quantum increases
  |$v$| = 0
  yield |$v$| # start at 0
  while |$v + q \leq b$|: # always true if unbounded
    |$v$| = |$v + q$| # step by the quantum
    yield |$+v$|
    yield |$-v$|
    if is_power_of_two(|$v$|) and |$v \geq v_{cut}$|:
      |$q$| = |$2 \times q$|
\end{minted}
\caption{%
An algorithm for generating the representable values
  of fixed-point and floating-point number formats,
  with or without a maximum representable value.
}
\label{fig:format:generator}
\end{figure}

\subsection{Abstract Number Format}
\label{ssec:format:abstract}


Fundamentally,
  the generators in \Cref{ssec:format:generators} 
  differ in just one respect: whether the quantum increases.
Notice that if we set $p = \infty$
  for the floating-point generator,
  it becomes a fixed-point generator:
  the value at which the quantum increases,
  $v_{cut} = \infty$, is never reached.
This suggests that fixed-point and floating-point 
  formats are instances of a single parameterized format.
\Cref{fig:format:generator} shows
  the algorithm for this unified generator\footnote{%
  The implementation is surprisingly compact---%
  just 26 lines of Python using the Fraction class.}.


All number formats described by the generator
  are \emph{quantized}, in the sense that they
  have a minimum step size between representable values.
What number formats exist outside of
  this class of quantized number formats?
Many number formats in $\F$ can be described
  by generator parameters $(p, q, b)$,
  where $q = 0$ denotes no minimum quantum.
\Cref{fig:formats:taxonomy} classifies these formats
  by three properties of their parameters:
  \emph{quantized} ($q > 0$),
  \emph{precision-bound} ($p < \infty$),
  and \emph{bounded} ($b < \infty$);
  number formats may satisfy any combination of these properties.



\begin{figure}[t!]
\centering
\footnotesize
\begin{tikzpicture}
\def\r{1.1}        
\def\sep{1.2}      
\def\boxw{4}       
\def\boxh{4}       

\begin{scope}[shift={(0,0)}]
    \draw (-\boxw/2, -\boxh/2) rectangle (\boxw/2, \boxh/2);
    
    \coordinate (A) at (-\sep/2, 0.5);
    \coordinate (B) at (\sep/2, 0.5);
    \coordinate (C) at (0, -0.7);
    
    \draw (A) circle (\r);
    \draw (B) circle (\r);
    \draw (C) circle (\r);

    \begin{scope}[transparency group, opacity=0.4]
        \fill[blue] (A) circle (\r);
        \fill[blue] (C) circle (\r);
    \end{scope}
    
    \node at ($(A)+(-\r/2+0.1,0.3)$) {$p < \infty$};
    \node at ($(B)+(\r/2-0.1,0.3)$) {$b < \infty$};
    \node at ($(C)+(0,-\r/2)$) {$q > 0$};

    \node[overlay,anchor=west] at (2.6,1) {machine floating point (IEEE 754, MX)};
    \node[overlay,anchor=west] at (2.6,0) {machine fixed point (int8, int16)};
    \node[overlay,anchor=west] at (2.6,-1) {fixed point without overflow ($\Z$)};

    \node[overlay,anchor=east] at (-2.6,1) {MPFR floating point};
    \node[overlay,anchor=east] at (-2.6,0) {floating point without overflow};

    \draw[overlay,->] (2.5,1) -- (0,0); 
    \draw[overlay,->] (2.5,0) -- (0.6,-0.3); 
    \draw[overlay,->] (2.5,-1) -- (0,-0.75); 

    \draw[overlay,->] (-2.5,1) -- (-1,0.3); 
    \draw[overlay,->] (-2.5,0) -- (-0.6,-0.3); 
\end{scope}
\end{tikzpicture}
\caption{%
Classifying number formats by their generator parameters
  $(p, q, b)$ into \emph{quantized} ($q > 0$),
  \emph{precision-bound} ($p < \infty$),
  or \emph{bounded} ($b < \infty$) number formats.
Our abstract number format $\A$ captures
  number formats in the shaded region:
  either quantized ($q > 0$) or
  precision-bound ($p < \infty$) number formats.
}
\label{fig:formats:taxonomy}
\end{figure}

Of the common number formats in $\F$,
  few lack a minimum quantum.
Formal proofs of floating-point arithmetic
  often ignore the effects of overflow and underflow,
  allowing the exponent to be any integer---%
  these floating-point formats are neither quantized nor bounded\footnote{%
  MPFR~\cite{mpfr} would implement this class of number formats
    were its exponent not constrained by system limits.
  For most practical use cases,
    the exponent should be considered unbounded.
  }.
Posit numbers~\cite{posit} fall outside this classification
  since no choice of parameters describes them:
  their precision \emph{varies} with magnitude.
The remaining two regions---%
  unquantized, bounded floating-point formats
  and bounded dyadic numbers---%
  are absent in real-world usage;
  we are unsure of their potential use.



The three parameters $p$, $q$, and $b$
  determine the numerical values of an \emph{abstract number format}
  unifying fixed-point and floating-point formats.
Since every quantum is a power of two,
  we can equivalently represent it using its exponent, $exp$,
  where $q = 2^{exp}$; if the format has no minimum quantum,
  then $exp = -\infty$.
A fourth parameter, $S$,
  enumerates the \emph{special values} of the format.
We denote this abstract number format by $\A$
  and an instance by $\A(p, exp, b, S)$.
For any number format $\A(p, exp, b, S)$,
  we will call $p \in \Z_{\geq 1} \cup \{\infty\}$
  the \emph{maximum precision} of the format,
  $exp \in \Z \cup \{ -\infty \}$ the \emph{exponent of the minimum quantum},
  $b \in \A(p, exp, \infty) \cup \{ \infty \}$
  the \emph{maximum representable value},
  and $S \subseteq \{+\infty, -\infty, \mathrm{NaN}\}$
  the \emph{special values}\footnote{%
  We model a single NaN,
    ignoring payloads and the quiet/signaling distinction.}.
When a format has no special values,
  we write $\A(p, exp, b)$ for $\A(p, exp, b, \emptyset)$.
Notice that we could let $b$ be any dyadic number.
However,
  $\A(p, exp, b)$ is equivalent to any
  $\A(p, exp, b')$ such that $b \leq b' < \text{next}_{p, exp}(b)$
  where $\text{next}_{p, exp}(b)$ produces the subsequent
  representable value after $b$ in $\A(p, exp, \infty)$.
Thus,
  the restriction that $b \in \A(p, exp, \infty) \cup \{ \infty \}$
  does not affect which number formats we can represent
  and simplifies the theory.

We place three additional restrictions on the parameters of $\A$.
First, an instance must be quantized ($exp > -\infty$)
  or precision-bound ($p < \infty$),
  corresponding to the shaded region of \Cref{fig:formats:taxonomy}.
Second, an instance must be non-trivial,
  representing at least one nonzero value.
Third, we disallow $\A(1, -\infty, b)$,
  whose representable values are zero and $\pm 2^k$ for all $2^k \leq b$:
  every significand is $c = 1$ and there is no smallest positive value,
  so the even/odd classification of \Cref{ssec:bg:rounding}
  is ill-defined\footnote{%
  For $\A(p, -\infty, b)$ with $p \geq 2$,
    we classify every $\pm 2^k$ as even,
    ensuring that the remaining values alternate in parity.}.

\begin{table}[ht!]
\caption{%
Common number formats expressed as instances of $\A$.
The number format \texttt{fixed<$s$,$n$>}
  is a signed $n$-bit integer implicitly scaled by $2^s$.
Since $\A$'s bound is symmetric, signed integer and fixed-point types
  omit their most negative value (e.g., $-128$ for \texttt{int8\_t}).
}
\begin{tabular}{ll|lllll}
\toprule
$\A(p, exp, b, S)$ & Source & $p$ & $exp$ & $b$ & $b$ (approx.) & $S$ \\
\midrule
binary64 & IEEE 754~\cite{ieee754} & 53 & -1074 & $(2^{53} - 1) \times 2^{971}$ & $1.798 \times 10^{308}$ & $\{\pm\infty, \mathrm{NaN}\}$ \\
binary32 & IEEE 754 & 24 & -149 & $(2^{24} - 1) \times 2^{104}$ &
$3.403 \times 10^{38}$ & $\{\pm\infty, \mathrm{NaN}\}$ \\
TensorFloat-32~\cite{ampere} & Nvidia & 11 & -136 &
  $(2^{11} - 1) \times 2^{117}$ & $3.401 \times 10^{38}$ & $\{\pm\infty, \mathrm{NaN}\}$ \\
bfloat16~\cite{bfloat16} & Google Brain & 8 & -133 & $(2^8 - 1) \times 2^{120}$ & $3.390 \times 10^{38}$ & $\{\pm\infty, \mathrm{NaN}\}$ \\
E5M2 & OCP MX~\cite{ocp-mx} & 3 & -16 & $(2^3 - 1) \times 2^{13}$ & 57344 & $\{\pm\infty, \mathrm{NaN}\}$ \\
E4M3 & OCP MX & 4 & -9 & $(2^4 - 2) \times 2^{5}$ & 448 & $\{\mathrm{NaN}\}$ \\
E2M1 & OCP MX & 2 & -1 & $(2^2 - 1) \times 2^{1}$ & 6 & $\emptyset$ \\
\texttt{int8\_t} & - & $\infty$ & 0 & $(2^{7} - 1) \times 2^{0}$ & 127 & $\emptyset$ \\
\texttt{fixed<-4,8>} & - & $\infty$ & -4 & $(2^{7} - 1) \times 2^{-4}$ & 7.9375 & $\emptyset$ \\
\texttt{fixed<4,8>} & - & $\infty$ & 4 & $(2^{7} - 1) \times 2^{4}$ & 2032 & $\emptyset$ \\
$\Z$ & - & $\infty$ & 0 & $\infty$ & $\infty$ & $\emptyset$ \\
(default) & MPFR~\cite{mpfr} & 53 & $-\infty$ & $\infty$ & $\infty$ & $\{\pm\infty, \mathrm{NaN}\}$ \\
\bottomrule
\end{tabular}
\label{fig:formats}
\end{table}

\Cref{fig:formats} lists common number formats
  expressed as instances of our abstract number
  format $\A$.
If $F$ has a defining standard,
  the standard is listed alongside it.
The abstract number format $\A$
  covers a wide range of real number formats,
  including IEEE 754 floating-point formats,
  machine learning formats like
  Nvidia's TensorFloat-32 format~\cite{ampere},
  Google's bfloat16 format~\cite{bfloat16},
  and the OCP MX formats E5M2 and E4M3~\cite{ocp-mx}.
It also covers fixed-point formats,
  machine integer types,
  and arbitrary-precision formats like
  the integers, $\Z$, and MPFR floating-point numbers.
Formats encode different special values:
  the IEEE 754 formats and their derivatives have
  both infinities and NaN, as does MX E5M2;
  MX E4M3 has only NaN;
  and MX E2M1 and the integer and fixed-point formats have neither.
Importantly,
  the parameters exactly determine the set
  of representable values within the number format.

\subsubsection*{Comparison to Flocq}

Flocq~\cite{flocq} takes a more general approach
  to unifying fixed- and floating-point formats.
Its \emph{generic format} is parameterized by a radix $\beta$
  and an \emph{exponent function} $\mathrm{fexp}$
  that sets the quantum between consecutive powers of $\beta$:
  a real number $x$ with $|x| \in [\beta^{e-1}, \beta^{e})$
  is representable when $x$ is an integer multiple of $\beta^{\mathrm{fexp}(e)}$.
Fixed point (FIX) is $\mathrm{fexp}(e) = exp$;
  floating point (FLX) with unbounded exponents is
  $\mathrm{fexp}(e) = e - p$;
  and floating point with a minimum quantum (FLT), as in IEEE 754,
  is $\mathrm{fexp}(e) = \max(e - p, exp)$.
In radix two,
  $\A(p, exp, \infty)$ is equivalent to the generic format with
  $\mathrm{fexp}(e) = \max(e - p, exp)$,
  which is FIX when $p = \infty$ and FLX when $exp = -\infty$;
  $\A(p, exp, b)$ further restricts it to $|x| \leq b$.
In \Cref{sec:round},
  we show how our particular parameterization
  reduces format containment, and with it correct double rounding,
  to an efficient check on $p$, $exp$, and $b$.

\subsubsection*{Format Redundancy}

The parameters \emph{do not uniquely} determine
  the set of representable values;
  there exists a redundancy in specifying number formats.
Consider the fixed-point format $\A(\infty, -2, 1)$
  and the floating-point format $\A(2, -2, 1)$,
  each a subset of the fixed-point and floating-point
  number format examples in \Cref{ssec:format:generators},
  respectively.
If we enumerate the representable values,
  we obtain the set $\{ 0, \pm 1/4, \pm 1/2, \pm 3/4, \pm 1 \}$
  for \emph{both} number formats.
We see that the quantum never increases
  for the floating-point format since the cutoff $v_{cut} = 1$
  is never exceeded.



We therefore classify $\A(2, -2, 1)$
  as a \emph{degenerate} floating-point format
  since the precision constraint $p = 2$ is unused.
In fact,
  $\A(2, -2, 1) = \A(3, -2, 1) = \cdots = \A(\infty, -2, 1)$.
Thus,
  we see another link between fixed-point and floating-point formats:
  \emph{every bounded, fixed-point number format
  is equivalent to infinitely many degenerate,
  floating-point number formats}.
Consequently, the classes of \Cref{fig:formats:taxonomy}
  follow the chosen parameters rather than the set of values:
  $\A(\infty, -2, 1)$ is not precision-bound, yet $\A(2, -2, 1)$ is.

This observation is particularly relevant
  for emerging machine learning number formats,
  where there is increasing pressure to reduce bitwidths for efficiency.
For example,
  the MX E2M1 format,
  $\A(2, -1, 6)$, is only four bits wide.
As number formats continue to shrink,
  implementers must contend with floating-point formats
  essentially becoming fixed-point formats.
At the extreme,
  models such as BitNet~\cite{bitnet, bitnet2} achieve
  competitive accuracy with FP16 transformer architectures
  while constraining weights to $\{-1, +1\}$ or $\{-1, 0, +1\}$.

\section{Format Containment and Correct Double Rounding}
\label{sec:round}

Having defined an abstract number format, $\A$,
  we now present two applications.
First, for any two instances of $\A$,
  we give an efficient, complete check
  for deciding \emph{format containment}---%
  that is, whether one number format is a
  subset of the other (\Cref{ssec:round:contain}).
Second,
  we use format containment to characterize
  when double rounding is correct
  (\Cref{ssec:round:safe}).
Finally,
  we show how correct double rounding enables a
  modular approach to building correctly-rounded number libraries
  by \emph{extending} existing number libraries
  to support new number formats (\Cref{ssec:round:design}).
We have mechanized this section's results in Lean~\cite{lean4};
  Appendix A of our supplementary materials presents prose proofs.

\subsection{Format Containment}
\label{ssec:round:contain}

How can we determine whether one number format is a subset 
  of another?
Naively, since format containment is just the subset
  relation $F_1 \subseteq F_2$,
  we could check if every value
  in $F_1$ is in $F_2$.
In particular,
  if both $F_1$ and $F_2$ are quantized instances
  of $\A$, then we could run the generators
  described in \Cref{ssec:format:generators}, enumerate 
  all possible values in both formats,
  and check if each value in $F_1$ is also in $F_2$.
For small, finite number formats---like those in 
  machine learning---this approach wouldn't 
  necessarily take a long time\footnote{%
  For formats with infinitely many representable values,
  enumeration is impossible.}.
Instead, we present an efficient check
  directly on the parameters of $\A$,
  without enumerating representable values.



Let $F_1 = \A(p_1, exp_1, b_1, S_1)$ and $F_2 = \A(p_2, exp_2, b_2, S_2)$.
\Cref{fig:round:containment} presents
  the two judgments for determining when $F_1 \subseteq F_2$.
The first rule, \textsc{$\A$-Contains-Prec}
  covers the majority of the cases,
  but is too restrictive for degenerate floating-point formats
  (or equivalently, bounded fixed-point formats).
For these cases, \textsc{$\A$-Contains-Sub}
  provides an alternative containment test.

\begin{figure}
\begin{mathpar}
%
\inferrule*[right=$\A$-Contains-Prec]
  { p_1 \leq p_2 \\ exp_1 \geq exp_2 \\ b_1 \leq b_2 \\ S_1 \subseteq S_2 }
  { \A(p_1, exp_1, b_1, S_1) \subseteq \A(p_2, exp_2, b_2, S_2) }

\inferrule*[right=$\A$-Contains-Sub]
  { b_1 \leq 2^{exp_1 + p_2} \\ exp_1 \geq exp_2 \\ b_1 \leq b_2 \\ S_1 \subseteq S_2 \\ exp_1 > -\infty }
  { \A(p_1, exp_1, b_1, S_1) \subseteq \A(p_2, exp_2, b_2, S_2) }
\end{mathpar}
\caption{%
Format containment rules
}
\label{fig:round:containment}
\end{figure}

\textsc{$\A$-Contains-Prec} says: if $F_2$ has at least 
  as much precision ($p_1 \leq p_2$), at least as small a 
  minimum quantum ($exp_1 \geq exp_2$), at least as 
  large a bound ($b_1 \leq b_2$),
  and every special value of $F_1$ ($S_1 \subseteq S_2$),
  then $F_1 \subseteq F_2$.
The conditions $b_1 \leq b_2$ and $S_1 \subseteq S_2$ are clearly necessary;
  otherwise, $b_1 \not\in F_2$ or $S_1 \not\subseteq F_2$.
Since every value in $F_2$ is a multiple of $2^{exp_2}$,
  it must be the case that $exp_1 \geq exp_2$.
Finally,
  $p_1 \leq p_2$ captures the relative precision of the 
  two formats.
Intuitively,
  any value with a $p_1$-digit binary significand is
  also expressible with a $p_2$-digit significand
  by padding with trailing zero digits and
  adjusting the exponent accordingly.
Therefore,
  under these four conditions,
  every value representable in $F_1$ is also representable 
  in $F_2$.

For degenerate floating-point formats,
  the maximum precision has no effect on the set of representable 
  values,
  yet the precision test of \textsc{$\A$-Contains-Prec} may fail.
In these cases,
  \textsc{$\A$-Contains-Sub} applies a different test:
  $b_1 \leq 2^{exp_1 + p_2}$.
Every representable value of $F_1$ is a multiple of $2^{exp_1}$,
  and $F_2$ represents every multiple of $2^{exp_1}$
  up to $2^{exp_1 + p_2}$.
Therefore,
  $b_1 \leq 2^{exp_1 + p_2}$ ensures that every representable
  value of $F_1$ is also representable in $F_2$.

The two rules are also \emph{complete}:

\begin{lemma}
\label{lem:contain-complete}
For any instances $F_1$ and $F_2$ of $\A$,
  $F_1 \subseteq F_2$ if and only if
  \textsc{$\A$-Contains-Prec} or \textsc{$\A$-Contains-Sub} applies.
\end{lemma}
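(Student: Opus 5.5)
We prove the two directions separately, working throughout with the concrete description of $\A(p, exp, b, S)$: a dyadic $x$ is a numerical value of the format iff $|x| \leq b$ and $x = \pm c \times 2^{e}$ for some integer $c < 2^{p}$ and some $e \geq exp$. This is the radix-two generic-format characterization noted in the comparison to Flocq, with $\mathrm{fexp}(e) = \max(e - p, exp)$. The special values are handled separately: $S$ is simply added to the set of values.

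\textbf{Soundness.} For \textsc{$\A$-Contains-Prec}, take $x = \pm c \times 2^{e}$ with $c < 2^{p_1}$, $e \geq exp_1$ and $|x| \leq b_1$. Since $p_1 \leq p_2$ we have $c < 2^{p_2}$. Since $exp_1 \geq exp_2$ we have $e \geq exp_2$. Since $b_1 \leq b_2$ we have $|x| \leq b_2$. Specials are covered by $S_1 \subseteq S_2$.

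For \textsc{$\A$-Contains-Sub}, every $x \in F_1$ is an integer multiple $k \times 2^{exp_1}$; this uses $exp_1 > -\infty$. The bound gives $|k| \leq 2^{p_2}$. If $|k| < 2^{p_2}$, then $x$ is representable in $F_2$ with exponent $exp_1 \geq exp_2$. If $|k| = 2^{p_2}$, rewrite $x = \pm 2^{p_2 - 1} \times 2^{exp_1 + 1}$. Here $p_2 = \infty$ is trivial, and if $p_2$ is finite the significand has fewer than $p_2$ digits. Finally $|x| \leq b_1 \leq b_2$.

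\textbf{Completeness.} Assume $F_1 \subseteq F_2$. The necessary conditions come first.
\begin{itemize}
\item $S_1 \subseteq S_2$ holds by comparing the non-real elements.
\item $b_1 \leq b_2$ holds because $b_1 \in F_1$ whenever $b_1 < \infty$. If $b_1 = \infty$, then $F_1$ has arbitrarily large values, so $b_2 = \infty$.
\item $exp_1 \geq exp_2$ holds because if $exp_1 < exp_2$ (including $exp_1 = -\infty$), the nontriviality assumption gives a small value $2^{e}$ with $e < exp_2$ in $F_1$. This value is not a multiple of $2^{exp_2}$. Concretely, $2^{exp_1}$ itself lies in $F_1$ when $exp_1$ is finite, since it lies below the first nonzero value up to quantum and we have $b_1 \geq 2^{exp_1}$. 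When $exp_1 = -\infty$, scaling any nonzero value down by powers of two stays in $F_1$ because $p_1 < \infty$.
\end{itemize}

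It then remains to show that if $p_1 > p_2$, then $b_1 \leq 2^{exp_1 + p_2}$ and $exp_1 > -\infty$. So suppose $p_1 > p_2$ (in particular $p_2 < \infty$). Pick the witness $w = (2^{p_2} + 1) \times 2^{exp_1}$. This is the first value above $2^{exp_1 + p_2}$ that needs $p_2 + 1$ significand digits at quantum $2^{exp_1}$, and it is representable in $F_1$ whenever $w \leq b_1$, since $p_2 + 1 \leq p_1$. It is not in $F_2$: its odd significand $2^{p_2} + 1$ cannot be shortened, and it has exponent $exp_1 \geq exp_2$, so it would need $p_2 + 1 \leq p_2$ digits. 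Hence $b_1 < w$. Because $b_1 \in \A(p_1, exp_1, \infty)$ and the next representable value after $2^{exp_1 + p_2}$ in that format is exactly $w$ (as $p_1 \geq p_2 + 1$), we get $b_1 \leq 2^{exp_1 + p_2}$.

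If $exp_1 = -\infty$, the same argument scales: $(2^{p_2} + 1) \times 2^{e}$ lies in $F_1$ for sufficiently negative $e$. Its exponent is then below $exp_2$ or it needs too many digits, so $F_1 \not\subseteq F_2$. This forces $exp_1 > -\infty$ in this case.

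The main obstacle is this last step. It must handle the edge cases carefully: $p_1 = \infty$ versus finite, $b_1 = \infty$, and the parity conventions for $p = 1$. The crux is the claim that $w$ is the immediate successor of $2^{exp_1 + p_2}$ in the unbounded $F_1$. That claim is what lets the restriction $b \in \A(p, exp, \infty)$ convert $b_1 < w$ into $b_1 \leq 2^{exp_1 + p_2}$.
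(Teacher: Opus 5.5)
Your proposal is correct and follows the paper's proof essentially step for step. The necessary conditions come from $b_1 \in F_1$ and $2^{exp_1} \in F_1$, with scaling arguments when $b_1 = \infty$ or $exp_1 = -\infty$, and the case $p_1 > p_2$ uses the same witness $(2^{p_2}+1) \times 2^{exp_1}$ together with the representability of $b_1$ to force $b_1 \leq 2^{exp_1 + p_2}$.

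Your treatment of $|k| = 2^{p_2}$ in the \textsc{$\A$-Contains-Sub} soundness case is more explicit than the paper's, though $2^{p_2-1}$ has exactly $p_2$ digits rather than fewer; that still suffices. The $p = 1$ parity convention you flag plays no role, since containment concerns only the sets of values.
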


We now prove that if $F_1 \subseteq F_2$,
  then one of the two rules applies.
Since $b_1 \in F_1$ and $S_1 \subseteq F_1$,
  we have $b_1 \leq b_2$ and $S_1 \subseteq S_2$;
  since $F_1$ is non-trivial, $2^{exp_1} \in F_1$,
  so $exp_1 \geq exp_2$.
If $p_1 \leq p_2$, \textsc{$\A$-Contains-Prec} applies.
Otherwise, the value $(2^{p_2} + 1) \times 2^{exp_1}$
  needs $p_2 + 1$ digits, so it lies outside $F_2$
  and must therefore lie outside $F_1$'s bound;
  as $b_1$ is representable in $\A(p_1, exp_1, \infty)$,
  this forces $b_1 \leq 2^{exp_1 + p_2}$,
  and \textsc{$\A$-Contains-Sub} applies.
Appendix A gives the full proof,
  including the cases $b_1 = \infty$ and $exp_1 = -\infty$.

Through the correspondence with Flocq's generic format
  (\Cref{ssec:format:abstract}),
  the inclusion lemmas\footnote{%
  \texttt{generic\_inclusion\_mag} and \texttt{generic\_inclusion\_le}, Flocq~4.2.2.}
  of Flocq~\cite{flocq} also establish the soundness of both rules,
  but not their completeness,
  and leave the user a side condition on the formats' exponent functions to prove;
  \Cref{fig:round:containment} reduces it to comparisons of parameters,
  including the bound.
Moreover,
  both of these tests run in constant-time relative
  to the number of representable values in $\A(p_1, exp_1, b_1)$,
  making them amenable to static analysis.
The parameters of $\A$
  bound the set of representable values and the distribution
  of representable values.

\subsection{Correct Double Rounding}
\label{ssec:round:safe}

Recall from \Cref{ssec:bg:double-round} that, 
  for number formats $F_1, F_2$ and rounding modes $rm_1, rm_2$,
  the double rounding $\rnd_{F_1, rm_1}\circ \rnd_{F_2, rm_2}$
  is correct whenever $\rnd_{F_1, rm_1} = \rnd_{F_1, rm_1} \circ \rnd_{F_2, rm_2}$:
  when $F_2$ has no observable effect on the final answer. 
When is this true?
Intuitively,
  it should be true when the intermediate rounding
  does not perturb its input too much,
  so that the final rounding makes the same choice
  it would have made on the original value.
In this section,
  we show how this intuition can be formalized as
  a format containment check, with the required precision
  depending on the rounding modes involved.


We begin by characterizing how much information a rounding preserves; this
  directly determines the preconditions for correct double rounding.
Rounding a real number $x$ to an instance of $\A$
  proceeds in two steps:
  split the significand of $x$ at a position determined by
  $x$, $p$, and $exp$---%
  we say the rounding \emph{rounds to} the $w$ digits it retains---%
  and then adjust the retained digits based on
  the discarded digits and the rounding mode
  (e.g., forcing the last digit odd under RTO).
The rounding mode plays no role in the first step.
Formally:
\begin{lemma}
\label{lem:kept-digits}
Let $\A(p, exp, b)$ be a number format,
  $rm$ be any rounding mode,
  and $x \in \R$ be a real number.
The number of digits $\rnd_{\A(p, exp, b), rm}(x)$ rounds to
  is a function of $p, exp$, and $x$.
\end{lemma}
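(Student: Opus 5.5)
The plan is to make the notion of ``the number of digits a rounding rounds to'' precise, and then exhibit it as an explicit formula in $p$, $exp$, and $x$.

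For $x \neq 0$, let $e(x) = \lfloor \log_2 |x| \rfloor$, so that $2^{e(x)} \leq |x| < 2^{e(x)+1}$. I would show that every representable value of $\A(p, exp, \infty)$ in the binade $[2^{e}, 2^{e+1}]$, together with its two neighbours bracketing $x$, is an integer multiple of the quantum
\[
  q(x) = 2^{\max(e(x) - p + 1,\; exp)} .
\]
This is the Flocq correspondence from \Cref{ssec:format:abstract} with $\mathrm{fexp}(e) = \max(e - p, exp)$, stated with a shifted exponent convention. I would establish it directly from the generator in \Cref{fig:format:generator}: below $v_{\text{cut}} = 2^{exp + p}$ the step is $2^{exp}$, and in the binade above $2^{k}$, for $k \geq exp + p$, the step is $2^{k - p + 1}$.

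Next, I would define the retained digits as $w(x) = e(x) - \log_2 q(x) + 1$, clamped below at $0$; I would treat $x = 0$, and more generally $|x| < 2^{exp}$, as rounding to $0$ digits. Then $\lfloor |x| / q(x) \rfloor$ is the $w(x)$-digit truncated significand, and the two adjacent candidates $y_1, y_2$ are $\pm\lfloor |x|/q(x)\rfloor q(x)$ and $\pm(\lfloor |x|/q(x)\rfloor + 1) q(x)$. I would check that these really are the adjacent representable values bracketing $x$ by verifying two things: both lie in $\A(p, exp, \infty)$, since their significands have at most $p$ digits (the upper one may carry into $2^{e+1}$, which is still representable); and no representable value lies strictly between them, since all representable values near $x$ are multiples of $q(x)$. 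Every rounding mode $rm$ selects between exactly $y_1$ and $y_2$, and that selection is the ``adjustment'' step. The digit count $w(x)$ is therefore defined without reference to $rm$, and it visibly depends only on $p$, $exp$, and $x$.

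The bound $b$ needs a small separate argument. By definition, rounding to $\A(p, exp, b)$ first rounds to $\hat{F} = \A(p, exp, \infty)$ and only then applies $\omega$ on overflow, so the split happens in $\hat F$ and $b$ plays no role in it. For $p = \infty$ the formula degenerates to $q = 2^{exp}$; for $exp = -\infty$ it degenerates to $q = 2^{e(x) - p + 1}$. Both instances are covered by the restrictions on $\A$.

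I expect the main obstacle to be the boundary cases: values of $x$ at or near a power of two, where rounding up crosses into the next binade, and the subnormal/normal transition at $v_{\text{cut}}$. In both places I must confirm that the quantum is determined by $x$ itself rather than by the rounded result, so that the definition of $w$ is not circular. I would first handle the case where $x$ is exactly representable, which forces the adjustment to be trivial, and then handle the strict-gap case using the characterization of $q(x)$.
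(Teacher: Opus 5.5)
Your proposal is correct and follows essentially the paper's argument: the paper likewise sets $e = \lfloor \log_2 |x| \rfloor$ and takes the least significant retained position to be $exp' = \max(e - p + 1, exp)$, degenerating to $exp$ when $p = \infty$ and to $e - p + 1$ when $exp = -\infty$. It then concludes that $w = e - exp' + 1$ is independent of $rm$ and $b$, asserting these positions directly rather than checking, as you do, that consecutive multiples of $2^{exp'}$ are the adjacent representable values bracketing $x$. One small caution: the paper does not clamp $w$ at $0$ (it is negative for $0 < |x| < 2^{exp-1}$), and your clamped count, while still a function of $p$, $exp$, and $x$, would break the exact identity $w_2 = w_1 + k$ of the next lemma for such small $x$.
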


This allows us to write $w_1$ and $w_2$ for the number of digits
  we round to when rounding to $F_1$ and $F_2$. 
\Cref{lem:kept-cmp} provides the relationship
  between $w_1$ and $w_2$:

\begin{lemma}
\label{lem:kept-cmp}
Let $x \in \R$ be a nonzero real number and $k \geq 0$ be an integer.
Suppose rounding to $\A(p, exp, b)$ rounds to $w_1$ digits.
Rounding to $\A(p + k, exp - k, b)$ rounds to $w_2 = w_1 + k$ digits.
\end{lemma}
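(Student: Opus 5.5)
We make the quantity in \refLemKeptDigits explicit and then compare the two formats directly. For nonzero $x$, write $e_x = \lfloor \log_2 |x| \rfloor$, so that $2^{e_x} \leq |x| < 2^{e_x+1}$. Rounding to $\A(p, exp, b)$ first rounds to the unbounded format $\A(p, exp, \infty)$. In that format, representable values near $|x|$ are multiples of the quantum $2^{\max(e_x - p + 1,\, exp)}$. This matches the Flocq correspondence $\mathrm{fexp}(e) = \max(e - p, exp)$ from \Cref{ssec:format:abstract}, with $e = e_x + 1$.

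The number of digits retained is the number of binary digits of $|x|$ at or above this quantum position. That is,
\[
w(p, exp, x) = e_x + 1 - \max(e_x - p + 1,\, exp) = \min(p,\; e_x + 1 - exp).
\]
This count may be zero or negative when $|x|$ lies below the minimum quantum. The formula is exactly the witness that the count depends only on $p$, $exp$ and $x$, as \refLemKeptDigits asserts. The bound $b$ only affects the overflow post-processing, so it does not enter. The conventions $p = \infty$ and $exp = -\infty$ are read in the extended integers, with $\min(\infty, t) = t$ and $t - (-\infty) = \infty$.

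With this formula the lemma is a direct computation. Take $w_1 = \min(p, e_x + 1 - exp)$. Replacing $(p, exp)$ with $(p+k, exp-k)$ gives
\[
w_2 = \min(p + k,\; e_x + 1 - exp + k) = \min(p,\; e_x + 1 - exp) + k = w_1 + k,
\]
because adding the same constant $k$ to both arguments of $\min$ commutes with taking the minimum. The extended cases go the same way: $\infty + k = \infty$ and $-\infty - k = -\infty$. The instance restrictions (quantized or precision-bound) guarantee that at least one argument of the $\min$ is finite, so $w_1$ is finite.

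The main obstacle is confirming that the closed form really matches the paper's informal notion of ``the digits it rounds to.'' There are two sources of trouble.

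First, the quantum must be computed from $e_x$ and not from the rounded result. When rounding carries $x$ up to $2^{e_x+1}$, the digit count must still be the one fixed before rounding, which is consistent with the split-then-adjust description.

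Second, the subnormal case where $e_x + 1 - exp \leq 0$ needs care, as does the transition at the cutoff $v_{\text{cut}} = 2^{exp+p}$. Here I would check against the generator in \Cref{fig:format:generator}. Values in $[2^{e}, 2^{e+1})$ are spaced $2^{exp}$ apart for $e < exp + p$, and $2^{e-p+1}$ apart otherwise. This agrees with $\max(e - p + 1, exp)$.

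Once the formula is justified, the lemma follows by the one-line computation above. If the mechanization instead defines the count via the split position, I would prove that definition equal to this $\min$ expression first.
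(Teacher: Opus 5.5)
Your proof is correct and is essentially the paper's argument: the paper finds the retained least-significant-digit position $exp' = \max(e - p + 1, exp)$ and shows it shifts down by exactly $k$, so $w_2 = e - exp'_2 + 1 = w_1 + k$. The only difference is that the paper handles $p = \infty$ and $exp = -\infty$ by an explicit three-way case split, whereas your closed form $\min(p, e_x + 1 - exp)$ in extended integers covers all three cases in one computation.
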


From here, it follows naturally that, 
  if $F_1 \subseteq F_2$, then $w_1 \leq w_2$.
For double roundings where both rounding modes are the same---%
  specifically for RAZ, RTO, and RTZ---%
  this is sufficient; correct double rounding reduces to a
  simple format containment check on $F_1$ and $F_2$.
However, in cases where RTO is used as the intermediate rounding and
  a different rounding mode---RTZ, RAZ, or RNE---is the final
  rounding mode, the picture is slightly more complicated. 
In these cases,
  we add one or two bits of padding to the intermediate format, 
  in accordance with the following lemma:

\begin{lemma}
\label{lem:rto-rounding}
Let $x'$ be the result of rounding $x$
  to precision $w + k$ under RTO,
  where $k \geq 1$.
Then $x$ is representable at precision $w$
  if and only if $x'$ is representable at precision $w$.
If $x$ is unrepresentable at precision $w$
  and $y_1 < x < y_2$ are the two adjacent representable values
  at precision $w$ that are closest to $x$,
  then $y_1 < x' < y_2$.
\end{lemma}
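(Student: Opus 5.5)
We work in the scaled setting that \refLemKeptDigits makes available. Fix $x \neq 0$; the case $x = 0$ is trivial, since $x' = 0$ and both statements hold. Rounding "to precision $w$" at $x$ means choosing an exponent $e$ determined by $x$ and the format parameters. The representable values near $x$ at precision $w$ are then exactly the integer multiples of $u = 2^{e}$. Rounding to precision $w+k$ at the same $x$ uses the grid of multiples of $u' = 2^{e-k} = u/2^k$; this is the content of \refLemKeptCmp, with the same splitting position shifted by $k$ digits. So the finer grid refines the coarser one, and each coarse interval $[mu, (m+1)u]$ is subdivided into $2^k \geq 2$ fine cells. Under RTO, $x' = x$ if $x$ is a multiple of $u'$. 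Otherwise $x'$ is the unique odd multiple of $u'$ among the two fine neighbours $n u' < x < (n+1)u'$.

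\emph{First claim.} If $x$ is representable at precision $w$, then $x$ is a multiple of $u$, hence of $u'$, so $x' = x$ and $x'$ is representable. Conversely, suppose $x$ is not a multiple of $u$. If $x$ is a multiple of $u'$, then $x' = x$ is not representable. Otherwise $x'$ is an odd multiple of $u'$, i.e.\ $x' = (2j+1)u'$. Since $u = 2^k u'$ with $k \geq 1$, every multiple of $u$ is an even multiple of $u'$, so $x'$ is not representable at precision $w$.

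\emph{Second claim.} Let $y_1 = mu < x < (m+1)u = y_2$. The fine neighbours $nu', (n+1)u'$ of $x$ lie in $[y_1, y_2]$ because the grids are nested. If $x$ is itself on the fine grid, then $x' = x$ lies strictly between $y_1$ and $y_2$. Otherwise $x'$ is the odd one among $nu'$ and $(n+1)u'$. The endpoints $y_1, y_2$ are even multiples of $u'$ because $k \geq 1$, so $x' \neq y_1, y_2$, which gives $y_1 < x' < y_2$.

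\emph{Main obstacle.} The delicate part is justifying that the exponent $e$ chosen for precision $w$ at $x$ and the one chosen for $w+k$ at $x'$ are consistent. The digit count can depend on the magnitude of the value, and $x'$ might sit in a different binade from $x$; for example, $x$ just below a power of two could round up to $2^{\lfloor \log_2 |x| \rfloor + 1}$. I would handle this by first noting that RTO never crosses a power of two outward. The only candidate is a power of two, which is an even multiple of $u'$ whenever the grid has at least two cells per binade. For the $p = 1$ parity convention of \Cref{ssec:bg:rounding}, this needs a separate check. So $x'$ stays in the same binade (or the subnormal range) as $x$, and \refLemKeptDigits then gives identical digit positions for $x$ and $x'$. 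The parity conventions for the degenerate cases ($p=1$, and $\pm 2^k$ even when $exp = -\infty$) would be checked separately against this argument.
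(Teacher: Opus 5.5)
Your proof is correct and follows the paper's argument step for step, phrased with grids of multiples of $u$ and $u' = u/2^k$ rather than trailing digits. In the representable case $x' = x$. An inexact RTO result is an odd multiple of $u'$, while every multiple of $u$ (in particular $y_1, y_2$) is an even one. Both fine neighbours of $x$ lie in $[y_1, y_2]$. The binade-consistency step you flag as the main obstacle is not needed: the paper, like your two claims, fixes the digit positions by $x$. In any case, once $y_1 < x' < y_2$ with $y_1, y_2$ adjacent representable values, $x'$ cannot be representable at precision $w$, whichever binade it lies in.
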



There are sixteen pairings of rounding modes for
  the intermediate and final roundings
  over RNE, RAZ, RTZ, and RTO.
Of these,
  \Cref{fig:correct-double-round} lists the six combinations
  that guarantee correct double rounding,
  along with the conditions under which they hold.
The first six rules decide when rounding to $F_2$ under $rm_2$
  is a \emph{safe intermediate} for rounding to $F_1$ under $rm_1$,
  written $\langle F_1, rm_1 \rangle \rhd \langle F_2, rm_2 \rangle$:
  rounding a real number through $F_2$ matches rounding it directly
  to $F_1$, or both overflow.
Their bounds on $b_2$ ensure that
  $F_2$ overflows only when $F_1$ does.

\textsc{rnd-Double} extends correctness to every input
  when special values compose ($\rnd_{F_1, rm_1} \circ \sigma_2 = \sigma_1$)
  and overflow composes ($\rnd_{F_1, rm_1} \circ \omega_2 = \omega_1$).
IEEE 754's $\sigma$ and $\omega$ compose for every rule but \textsc{rnd-RTO-RTZ}:
  the intermediate RTO rounding overflows to $\pm\infty$
  (our convention, since IEEE 754 has no RTO)
  and the final RTZ rounding returns $\pm\infty$ unchanged,
  whereas rounding directly under RTZ overflows to $\pm b_1$;
  an intermediate that saturates instead ($\omega_2(\pm 1) = \pm b_2$) restores it.
Longer chains of roundings reduce pairwise:
  if rounding through $F_3$ into $F_2$
  and through $F_2$ into $F_1$ each satisfy \textsc{rnd-Double},
  then $\rnd_{F_1, rm_1} \circ \rnd_{F_2, rm_2} \circ \rnd_{F_3, rm_3} = \rnd_{F_1, rm_1}$.

\begin{figure}[!t]
\begin{mathpar}
\inferrule*[right=rnd-RTZ-RTZ]
  { \A{(p_1, exp_1, \text{next}_{p_1, exp_1}(b_1), S_1)} \subseteq \A{(p_2, exp_2, b_2, S_2)} }
  { \langle \A(p_1, exp_1, b_1, S_1), \text{RTZ} \rangle \rhd \langle \A(p_2, exp_2, b_2, S_2), \text{RTZ} \rangle }

\inferrule*[right=rnd-RAZ-RAZ]
  { \A{(p_1, exp_1, b_1, S_1)} \subseteq \A{(p_2, exp_2, b_2, S_2)} }
  { \langle \A(p_1, exp_1, b_1, S_1), \text{RAZ} \rangle \rhd \langle \A(p_2, exp_2, b_2, S_2), \text{RAZ} \rangle }

\inferrule*[right=rnd-RTO-RTO]
  { \A{(p_1, exp_1, \text{next}_{p_1, exp_1}(b_1), S_1)} \subseteq \A{(p_2, exp_2, b_2, S_2)} \\ p_2 \geq 2 }
  { \langle \A(p_1, exp_1, b_1, S_1), \text{RTO} \rangle \rhd \langle \A(p_2, exp_2, b_2, S_2), \text{RTO} \rangle }

\inferrule*[right=rnd-RTO-RTZ]
  { \A{(p_1 + 1, exp_1 - 1, \text{next}_{p_1, exp_1}(b_1), S_1)} \subseteq \A{(p_2, exp_2, b_2, S_2)} }
  { \langle \A(p_1, exp_1, b_1, S_1), \text{RTZ} \rangle \rhd \langle \A(p_2, exp_2, b_2, S_2), \text{RTO} \rangle }

\inferrule*[right=rnd-RTO-RAZ]
  { \A{(p_1 + 1, exp_1 - 1, \text{next}_{p_1, exp_1}(b_1), S_1)} \subseteq \A{(p_2, exp_2, b_2, S_2)} }
  { \langle \A(p_1, exp_1, b_1, S_1), \text{RAZ} \rangle \rhd \langle \A(p_2, exp_2, b_2, S_2), \text{RTO} \rangle }

\inferrule*[right=rnd-RTO-RNE]
  { \A{(p_1 + 2, exp_1 - 2, \text{next}_{p_1 + 1, exp_1 - 1}(b_1), S_1)} \subseteq \A{(p_2, exp_2, b_2, S_2)} }
  { \langle \A(p_1, exp_1, b_1, S_1), \text{RNE} \rangle \rhd \langle \A(p_2, exp_2, b_2, S_2), \text{RTO} \rangle }

\inferrule*[right=rnd-Double]
  { \langle F_1, rm_1 \rangle \rhd \langle F_2, rm_2 \rangle \\
    \rnd^{\sigma_1, \omega_1}_{F_1, rm_1} \circ \sigma_2 = \sigma_1 \\
    \rnd^{\sigma_1, \omega_1}_{F_1, rm_1} \circ \omega_2 = \omega_1 }
  { \rnd^{\sigma_1, \omega_1}_{F_1, rm_1} =
    \rnd^{\sigma_1, \omega_1}_{F_1, rm_1} \circ \rnd^{\sigma_2, \omega_2}_{F_2, rm_2} }
\end{mathpar}
\caption{%
Correct double rounding.
The first six rules check that rounding a real number through $F_2$
  matches rounding it directly to $F_1$, or both overflow
  ($\langle F_1, rm_1 \rangle \rhd \langle F_2, rm_2 \rangle$).
Rule \textsc{rnd-Double} adds that special values and overflow compose.
}
\label{fig:correct-double-round}
\end{figure}

The first three rules
  (\textsc{rnd-RTZ-RTZ}, \textsc{rnd-RAZ-RAZ}, \textsc{rnd-RTO-RTO})
  apply when both roundings use the same mode
  and reduce to a single containment check.
\textsc{rnd-RTZ-RTZ} and \textsc{rnd-RAZ-RAZ} are well known;
  CPFloat~\cite{cpfloat}, for example, notes that double rounding
  under a directed rounding mode is always correct.
For RTZ, truncating to $w_2$ digits then $w_1$ digits
  equals truncating directly to $w_1$ when $w_1 \leq w_2$---%
  guaranteed by \Cref{lem:kept-cmp}.
Similar arguments apply to the RAZ and RTO modes.
The \textsc{rnd-RTO-RTO} rule also requires $p_2 \geq 2$
  to avoid parity mismatches when
  \emph{both} formats have one bit of precision.

The remaining rules---%
  \textsc{rnd-RTO-RTZ}, \textsc{rnd-RTO-RAZ}, and \textsc{rnd-RTO-RNE}---%
  all use RTO as the intermediate rounding.
\citet{boldo-rto} prove \textsc{rnd-RTO-RNE}
  and remark, without proof, that the same intermediate serves
  any final rounding mode;
  RLIBM-ALL~\cite{rlibm-all} applies this with two extra bits
  for every mode.
Our rules make this precise and show that directed final modes
  need only one extra bit, and RTO none.
The first two rules
  share the same premise, ensuring $w_1 + 1 \leq w_2$.
Under that condition,
  \Cref{lem:rto-rounding} guarantees that any
  (un)representable value in $\A{(p_1, exp_1, b_1)}$ is still
  (un)representable after the intermediate rounding.

For \textsc{rnd-RTO-RNE},
  a slightly different premise instead
  guarantees that $w_1 + 2 \leq w_2$.
The additional digit preserves halfway cases,
  which is necessary to correctly round under RNE.
Flocq mechanizes their theorem as \texttt{round\_N\_odd}~\cite{flocq}.
Stated over $\A$ with our bound condition, it reads:
  for floating-point formats
  $\A{(p_1, exp_1, b_1)}$ and $\A{(p_2, exp_2, b_2)}$,
  if $2 \leq p_1 \leq p_2 - 2$, $exp_1 \geq exp_2 + 2$, and
  $\text{next}_{p_1+1, exp_1-1}(b_1) \leq b_2$, then
\[
\rnd_{\A(p_1, exp_1, b_1), \text{RNE}} =
  \rnd_{\A(p_1, exp_1, b_1), \text{RNE}} \circ
  \rnd_{\A(p_2, exp_2, b_2), \text{RTO}}.
\]
Their bound condition is slightly weaker:
  $b_2$'s overflow threshold need only be at least $b_1$'s.

The other ten pairings of rounding modes
  do not support correct double rounding in general.
The construction applies to any $F_1 \subseteq F_2$ assuming
  that no rounding overflows.
Let $y$ be an even representable value of $F_1$
  with odd neighbors $y^- < y < y^+$,
  and let $\delta$ be a quarter of the quantum between
  $F_2$'s representable values on the relevant side of $y$.
For each pairing (\Cref{fig:double-round-cex}),
  $x$ sits just past $y$, or just past the halfway point to a neighbor.
The intermediate rounding maps $x$ to $y$, or to the $y$-side of the
  halfway point, so the double rounding returns the even $y$ rather than
  the odd neighbor that direct rounding gives.
Under the \textsc{RTZ-RAZ} pairing, for instance, $x = y + \delta$ truncates
  to $y$ under RTZ and stays there under RAZ, whereas
  $\rnd_{F_1, \text{RAZ}}(x) = y^+$.
Since $\delta$ shrinks as $F_2$ grows, $x$ only approaches $y$:
  a larger intermediate format never helps.

\begin{table}[!t]
\centering
\small
\caption{%
Counterexamples for the ten unsound double roundings $rm_2$-$rm_1$
  ($rm_2$ intermediate, $rm_1$ final).
Here $y$ is an even value of $F_1$ with odd neighbors $y^- < y < y^+$,
  $\delta$ is a quarter of $F_2$'s spacing near $y$, and $m$ is the halfway
  point just above $y$ (for \textsc{RTZ-RNE}) or just below it
  (for \textsc{RAZ-RNE}, \textsc{RNE-RNE}).
All require only $F_1 \subseteq F_2$,
  except \textsc{RNE-RNE}, which requires
  $\A(p_1 + 1, exp_1 - 1, b_1) \subseteq \A(p_2, exp_2, b_2)$
  to ensure $m \in F_2$.
}
\begin{tabular}{@{}llll@{}}
  \toprule
  $x$ & $\rnd_{F_1, rm_1}(x)$ & $\rnd_{F_1, rm_1}(\rnd_{F_2, rm_2}(x))$ & Double roundings $rm_2$-$rm_1$ \\
  \midrule
  $y - \delta$ & $y^-$ & $y$
    & \textsc{RNE-RTZ}, \textsc{RAZ-RTZ}, \textsc{RAZ-RTO}, \textsc{RNE-RTO} \\
  $y + \delta$ & $y^+$ & $y$
    & \textsc{RNE-RAZ}, \textsc{RTZ-RAZ}, \textsc{RTZ-RTO} \\
  $m \pm \delta$ & $y^{\pm}$ & $y$
    & \textsc{RTZ-RNE}, \textsc{RAZ-RNE}, \textsc{RNE-RNE} \\
  \bottomrule
\end{tabular}
\label{fig:double-round-cex}
\end{table}

As the intermediate rounding,
  RTO supports correct double rounding for all four final modes,
  at the cost of at most two extra bits of precision.
The directed modes RTZ and RAZ support double rounding
  only when the intermediate and final modes agree, and
  RNE---despite being the default, and often the only,
  rounding mode available in programming environments---%
  permits it only in trivial cases
  (e.g., $F_2 \cap [-b_1, b_1] = F_1$, on inputs where neither rounding overflows),
  even when the intermediate format is made arbitrarily large.

Prior work exploits this property,
  using RTO whenever intermediate results must later
  be safely re-rounded~\cite{rlibm,gcc-rto,boldo-rto-apps}.
We appeal to this same property to build a modular number library:
  a single round-to-odd kernel, evaluated in high enough precision,
  re-rounds to a correctly-rounded result under any final rounding mode.
Rather than a separate implementation for each,
  \name computes its high-precision intermediates in RTO (\Cref{sec:eval}).

\subsection{Modular Number Libraries}
\label{ssec:round:design}

Whenever double rounding is correct,
  we can implement a correctly-rounded function
  by combining higher-precision computation with
  rounding to the target number format:
\[
\rnd_{F_1, rm_1} = \rnd_{F_1, rm_1} \circ \rnd_{F_2, rm_2}
\Rightarrow
f_{F_1, rm_1} = \rnd_{F_1, rm_1} \circ f_{F_2, rm_2}.
\]
Thus correct double rounding enables
  a modular approach to building number libraries
  by cleanly \emph{separating rounding from arithmetic}.
For each real-valued function $f$,
  it suffices to provide a high-precision implementation
  $f_{F_2, rm_2}$ that can be safely re-rounded
  to the target number formats.
Importantly,
  the rules from \Cref{fig:correct-double-round}
  apply \emph{independently} of the particular function $f$
  unlike in prior work~\cite{rump-double-round,roux-double-round}.
For the intermediate rounding,
  the RTO rounding mode is the natural choice.

In particular,
  the high-precision computation can
  rely on \emph{existing} implementations
  such as hardware-provided capabilities,
  well-tested number libraries like \softfl and MPFR,
  or custom kernels.
As long as these implementations
  produce a correctly-rounded result that satisfies
  the preconditions for correct double rounding,
  they can be safely composed with rounding
  to lower-precision formats.
Compared to a traditional monolithic design,
  this method yields a number library that
  is easier to maintain, requiring less code;
  easier to test, since only the rounding operators must be validated;
  and efficient, by reusing highly optimized high-precision
  implementations.

In \Cref{sec:eval},
  we present \name, a prototype of this approach.
We describe how it extends
  several high-precision implementations
  and evaluate its performance against existing number libraries.

\section{Program Analyses and Transformations}
\label{sec:analysis}




The results of the previous sections---format containment
  and correct double rounding---operate on individual
  rounding operations.
We now turn to programs that compose \textit{multiple}
  correctly-rounded operations.
We introduce \emph{format inference},
  which conservatively bounds each expression by an instance of $\A$
  (\Cref{ssec:analysis:infer}),
  and use it to justify transformations that
  eliminate and insert roundings
  (\Cref{ssec:analysis:round_elim}).
Together,
  these yield a \emph{semantics-preserving} rewriting strategy
  for mapping numerical specifications onto a target
  environment's rounded operations (\Cref{ssec:analysis:compile});
  we apply it by hand to a case study in \Cref{sec:case_study}.


%
%

\subsection{Format Inference}
\label{ssec:analysis:infer}


We consider programs consisting of assignment statements whose 
  right-hand sides are constants, variables, or 
  applications of arithmetic operators (e.g., addition). 
Each operator is either correctly-rounded, $f_{F,rm}$,
  or unrounded, $f_{\R,\_}$, that is, it computes over the reals and does not round.
For example, $\text{add}_{\text{FP32},\text{RNE}}$
  denotes correctly-rounded single-precision addition, 
  while $\text{mul}_{\R,\_}$ denotes exact multiplication.


Format inference proceeds by analyzing
  each assignment statement in execution order, maintaining a
  mapping from variables to number formats.
Numerical constants are assigned the
  tightest bound containing them; variables take the format
  they are mapped to. 
Correctly-rounded operations $f_{\A(p, exp, b, S), rm}$ 
  produce values in $\A(p, exp, b, S)$, and unrounded operations 
  $f_{\R, \_}$ are assigned the ``format" $\R$.

For unrounded operations,
  we can often derive a tighter bound from
  the formats of the arguments.
Negation and absolute value preserve the format of their operand,
  except that they negate or take the magnitude of its infinities.
Addition and multiplication are as follows:
\[
\text{fmt}(\text{add}_{\R, \_}(e_1, e_2)) =
  \text{fmt}(e_1) \oplus \text{fmt}(e_2),
\qquad
\text{fmt}(\text{mul}_{\R, \_}(e_1, e_2)) =
  \text{fmt}(e_1) \otimes \text{fmt}(e_2).
\]
Here, $\oplus$ and $\otimes$ are given by:
\begin{align*}
\A_{(p_1, exp_1, b_1, S_1)} \oplus \A_{(p_2, exp_2, b_2, S_2)} &\subseteq
    \A_{(\max(1,\, \lceil \log_2(\lfloor (b_1 + b_2)/2^{\text{min}(exp_1, exp_2)} \rfloor + 1) \rceil),\ \text{min}(exp_1, exp_2),\ b_1 + b_2,\ S_\oplus)}, \\
\A_{(p_1, exp_1, b_1, S_1)} \otimes \A_{(p_2, exp_2, b_2, S_2)} &\subseteq
  \A_{(p_1 + p_2, exp_1 + exp_2, b_1 \times b_2, S_\otimes)}.
\end{align*}
For addition, the precision is the number of binary digits
  needed to represent the worst-case sum, $b_1 + b_2$, as
  a multiple of $2^\text{min}(exp_1, exp_2)$.
The sets $S_\oplus$ and $S_\otimes$ hold the special values
  that a sum or product can produce under IEEE 754:
  a NaN operand yields NaN,
  $\infty - \infty$ and $\infty \times 0$ are NaN,
  and an infinity otherwise yields an infinity.
For example, $\otimes$ takes two values in the MX E5M2 format
  $\A(3, -16, 7 \times 2^{13}, \{\pm\infty, \mathrm{NaN}\})$
  to $\A(6, -32, 49 \times 2^{26}, \{\pm\infty, \mathrm{NaN}\})$;
  \Cref{sec:case_study} traces this analysis through an entire program.

This analysis is similar to \emph{interval analysis}~\cite{interval},
  which tracks intervals $[l, h]$ for each real-valued expression.
Format inference is more precise, tracking not only the range
  of possible values but also their spacing, since only some values
  in the interval are representable.

\subsection{Introducing \& Eliminating Roundings}
\label{ssec:analysis:round_elim}

Format inference yields a single insight that enables
  us to \emph{insert} and \emph{eliminate} roundings.
Since every correctly-rounded operation consists of a real-valued
  computation followed by a rounding 
  $f_{F, rm} = \rnd_{F, rm} \circ f_{\R, \_}$,
  we can apply format inference to the unrounded computation
  $f_{\R, \_}$ \textit{prior} to rounding.
Suppose format inference determines
  that the unrounded result already lies
  within the target format $F$; formally,
\[
\text{fmt}(f_{\R, \_}(e_1, \ldots, e_n)) \subseteq F.
\]
In this case,
  every value $v$ produced by $f_{\R, \_}$ is already representable in $F$,
  so $\rnd_{F, rm}(v) = v$, assuming $\sigma = \text{id}$, and 
\[
  f_{F, rm}(e_1, \ldots, e_n) = f_{\R, \_}(e_1, \ldots, e_n).
\]
This identity can be read in two directions. 

Left-to-right, it justifies \textit{canonicalization}:
  a correctly-rounded operation can be replaced by its real
  counterpart, since the rounding does nothing,
  simplifying the program.
For example,
  consider the expression $\text{mul}_{\text{FP32}, \text{RNE}}(e_1, e_2)$,
  where $e_1$ and $e_2$ produce values representable in MX E5M2.
In the previous subsection,
  format inference discovers that $\text{mul}_{\R, \_}(e_1, e_2)$
  is bounded by $\A(6, -32, 49 \times 2^{26}, \{\pm\infty, \mathrm{NaN}\}) \subseteq \text{FP32}$.
Therefore,
  the rounding to $\text{FP32}$ has no effect,
  and so the following rewrite is semantics-preserving:
\[
  \text{mul}_{\text{FP32}, \text{RNE}}(e_1, e_2)
  \;\Longrightarrow\;
  \text{mul}_{\R, \_}(e_1, e_2).
\]

Read right-to-left, it justifies \textit{finitization}:
  an unrounded operation can be replaced by a correctly-rounded
  one, since the rounding would change nothing semantically.
This transformation enables a choice of number formats
  and rounding modes which are available in the programming environment:
  real-valued computations can be replaced by equivalent roundings.
Running our previous example in reverse,
  if we have $\text{mul}_{\R, \_}(e_1, e_2)$,
  where $e_1$ and $e_2$ produce values representable in MX E5M2,
  then we can rewrite the unrounded product into an
  FP32 operation available in hardware,
\[
\text{mul}_{\R, \_}(e_1, e_2)
  \;\Longrightarrow\;
  \text{mul}_{\text{FP32}, \text{RNE}}(e_1, e_2)
\]
  with no change in result.


The \emph{correct double rounding} rules
  presented in \Cref{ssec:round:safe}
  lead to a similar bi-directional rewrite.
In the canonicalization direction,
  we can detect correct double roundings within the program
  and merge them into a single correctly-rounded operation.
Consider a composition
  $\rnd_{F_1, rm_1}(f_{F_2, rm_2}(e))$.
If we prove that it is a correct double rounding,
  i.e., $\rnd_{F_1, rm_1} \circ \rnd_{F_2, rm_2} = \rnd_{F_1, rm_1}$,
  then the intermediate rounding to $F_2$
  does not affect the result of the expression.
We can therefore replace the composition
  with the single correctly-rounded operation $f_{F_1, rm_1}(e)$.
In the finitization direction,
  we can expand a correctly-rounded operation
  into a higher-precision computation and re-rounding,
  both of which are available in the programming environment.



\subsection{A Rewriting Strategy}
\label{ssec:analysis:compile}




\begin{wrapfigure}{r}{0.4\textwidth}
\centering
\vspace{-\baselineskip}
\footnotesize
\begin{tikzpicture}[>=stealth, node distance=2cm]
  \node (N1) at (0,0) {$P_1$};
  \node (N2) at (1.5,1.5) {$P_2$};
  \node (N3) at (3,0) {$P_3$};
  \draw[->] (N1) -- (N2);
  \draw[->] (N2) -- (N3);
  \draw[->, double, dotted, double distance=1pt] (N1) -- (N3);
  \node at (0,0.9) {canonicalize};
  \node at (2.7,0.9) {finitize};
\end{tikzpicture}
\caption{%
Rewriting using \emph{canonicalization} and \emph{finitization}.
All three programs, $P_1$, $P_2$, and $P_3$, are equivalent.
}
\label{fig:analysis:compile}
\vspace{-\baselineskip}
\end{wrapfigure}

Together, format inference and the transformations of
  \Cref{ssec:analysis:round_elim} yield a rewriting strategy
  for programs of correctly-rounded operations.
The strategy has three phases (\Cref{fig:analysis:compile}).
First, format inference assigns a format to every expression.
Second, canonicalization lifts the program $P_1$ to
  a program $P_2$ in which every rounding is \emph{essential},
  replacing $f_{F, rm}(e)$ with $f_{\R, \_}(e)$
  whenever $\text{fmt}(f_{\R, \_}(e)) \subseteq F$.
Third, finitization lowers $P_2$ to an equivalent program $P_3$
  in which every rounding is \emph{available}
  in the programming environment,
  replacing each unrounded operation with a rounded one
  whose available format $F'$ contains its inferred format.
Since the inferred format need not itself be available,
  $F'$ may be larger; we call this \emph{promotion}.
Unlike $P_1$ and $P_2$,
  $P_3$ uses only available operations
  and is thus fully executable.

Format inference also guides \emph{exact decompositions},
  which rewrite an expression as a sum of parts,
  in the spirit of error-free transformations~\cite{eft}.
For example, \emph{splitting} performs the rewrite
\[
  e \;\Longrightarrow\; \text{add}_{\R, \_}(\text{hi}_k(e), \text{lo}_k(e)),
\]
  where $\text{hi}_k(e)$ truncates $e$ to a multiple of $2^{exp + k}$
  and $\text{lo}_k(e)$ is the remainder;
  if $\text{fmt}(e) = \A(p, exp, b)$ with $exp > -\infty$,
  then $\text{fmt}(\text{hi}_k(e)) \subseteq \A(\infty, exp + k, b)$
  and $\text{fmt}(\text{lo}_k(e)) \subseteq \A(\infty, exp, 2^{exp + k})$.
Since the decomposition is exact, the rewrite is semantics-preserving.
When $\text{fmt}(e)$ is a fixed-point format,
  each part needs fewer digits than $e$,
  so finitization can choose smaller available formats
  for the operations that consume them.



Since multiple lowerings are possible for $P_2$,
  choosing one means searching the space of possible implementations.
This process is amenable to automation:
  both format inference
  and the format containment checks are straightforward and fast,
  and the search space of possible
  roundings and formats is bounded by the number of
  formats and operations in the programming environment;
  we leave automating the strategy to future work.
In \Cref{sec:case_study},
  we manually apply this strategy to
  the MX dot product example from \Cref{sec:overview} to
  develop optimized numerical models of the design.

\section{Evaluation}

\label{sec:eval}

We prototype a number library, dubbed \name, based on
  the approach outlined in \Cref{ssec:round:design}.
It is a \emph{correctly-rounded}, \emph{multi-precision}
  number library implemented in C++.

To evaluate the contributions of this paper, we pose the following
  research questions:

\begin{enumerate}[label*=\textbf{RQ\arabic*}]
\item Does \name support a broad class
  of number formats and rounding modes? (\Cref{ssec:eval:impl})
\item Is \name's performance competitive with
  comparable number libraries? (\Cref{ssec:eval:eval})
\item Can we use format inference and double rounding to safely implement
  a real-world algorithm, and can \name make the implementation more performant?
  (\Cref{sec:case_study})
\end{enumerate}

\Cref{ssec:eval:impl} describes \name's implementation and expressivity;
  \Cref{ssec:eval:eval} compares the performance of its
  correctly-rounded operations against SoftFloat, FloppyFloat, MPFR, and CPFloat;
  \Cref{sec:case_study} demonstrates how our techniques
  can be used to implement the MX dot product in a non-standard format,
  and how \name's primitives improve performance.

\subsection{Implementation}
\label{ssec:eval:impl}


\name implements optimized rounding procedures
  from natively-supported floating-point and integer types
  to instances of $\A$ under all seven rounding modes
  described in \Cref{ssec:bg:rounding}.
It computes a higher-precision, round-to-odd intermediate
  and then re-rounds it to the target.
This is correct whenever the intermediate format that
  \Cref{fig:correct-double-round} requires is contained in
  the type that holds it (binary64 or binary32);
  RNA follows from \textsc{rnd-RTO-RNE}, which holds for any tie-break,
  and RTP and RTN reduce to RAZ and RTZ by sign.

Any backend that produces this intermediate suffices;
  to demonstrate this generality, \name provides four:
  native floating point, the SoftFloat and FloppyFloat libraries,
  and custom kernels based on error-free transformations.
The current implementation pads every target uniformly by two bits,
  so in binary64\footnote{%
    In binary32, it supports $p \leq 22$ and $exp \geq -147$.}
  it supports every mode for
  $p \leq 51$ and $exp \geq -1072$;
  a debug-build assertion checks the precision,
  while the exponent and bound are preconditions.
The final rounding is a single software kernel for every mode:
  it splits the intermediate $x$ into $s + t$,
  where $s$ holds the target's retained bits,
  and returns $s$ or $\text{next}(s)$ depending on $t$ and the mode.
On overflow it uses IEEE 754's $\omega$,
  or saturates ($\omega(\pm 1) = \pm b$) for formats without infinities.

To ensure correctness,
  we test each backend exhaustively against MPFR,
  the de facto reference for correctly-rounded arithmetic,
  on every IEEE 754-style binary format of at most eight bits
  with a 2- to 5-bit exponent field
  (14 formats, each with subnormals, signed zeros, and infinities).
For each format, operation
  (\texttt{add}, \texttt{sub}, \texttt{mul}, \texttt{div}, \texttt{sqrt}, \texttt{fma}),
  rounding mode, backend, and container (\texttt{double} or \texttt{float}),
  we evaluate every input---%
  4,704 configurations and about 3.17 billion cases in total---%
  and compare results bit-for-bit, including the sign of zero\footnote{
    \name also implements IEEE 754-style exception flags;
    we verified their correctness against their IEEE 754 definitions.}.
MPFR lacks RTO and supports RNA only partially, so we derive both:
  RTO rounds under RTZ and forces the last bit odd if inexact;
  RNA equals RNE except at a midpoint, where it equals RAZ.
A randomized suite further checks unbounded-exponent formats
  of precision 2 to 8 against MPFR under five modes (all but RNA and RTO),
  on one million uniform inputs in $[-1, 1]$ per operation and mode
  ($[0, 1]$ for square root).

\emph{Native Floating Point}.
We emulate RTO arithmetic on top of hardware floating-point
  arithmetic, following \citet{boldo-rto}.
Since hardware does not natively support RTO, 
  we emulate it in hardware by
  performing the operation under RTZ
  and then adjusting the least significant
  bit of the significand.
We set the LSB to the logical OR of itself
  and the \emph{inexact flag}, which can
  be fetched from the architectural state.
After each operation,
  we restore the original rounding mode.
This approach is slow:
  it must set and fetch the floating-point
  control register twice per operation.
The resulting performance degradation
  has been previously documented~\cite{riscv-on-x64, rm-slow}.

\emph{SoftFloat}.
We extend the SoftFloat library using
  its double-precision arithmetic operations.
SoftFloat supports RTO,
  so each operation requires only setting
  the rounding mode (a global variable)
  before invoking the corresponding library function.

\emph{FloppyFloat}.
FloppyFloat provides similar capabilities to SoftFloat,
  but does not support RTO.
We emulate RTO as we did for hardware floating point:
  performing the operations in RTZ mode, then
  extracting the inexact flag provided by the library.

\emph{Error-free transformations}.
An error-free transformation (EFT) of an operation
  computes the rounded result and error term(s)
  such that their sum is the exact real result.
For example,
  the TwoSum algorithm~\cite{handbook}
  transforms $x + y$ into $s + t$ where
  $s$ is the nearest rounding and $t$ is the error term\footnote{%
  \citet{shewchuk} would classify $s$ and $t$
    as \emph{nonoverlapping}:
    the lowest set bit of $s$ is above the highest set bit of $t$.
  }.
We can transform the result of an EFT
  into the RTO result.
If $t$ is zero, then $s$ is the RTO result.
If $s$ and $t$ have the same sign,
  the RTO result is $s$ with the least significant bit
  of its significand set to 1.
Otherwise, we step $s$ one representable value toward zero,
  producing $s'$ (the error term now has the same sign as $s'$),
  and set the least significant bit of the significand of $s'$ to 1.

EFT algorithms~\cite{eftsanitizer} exist
  for multiplication and fused-multiply-add (FMA).
For division and square-root,
  we cannot represent the error term exactly;
  however, only the sign of the error term matters,
  so we can disregard any lower-order errors.
EFTs also extend to sums of $n$ numbers~\cite{shewchuk},
  so \name provides \texttt{add3} and \texttt{add4} operations
  that are rounded to odd.

This answers \textbf{RQ1}: in just \ImplSLOC source lines of code,
  \name's operations are correctly rounded for
  any instance of $\A$ whose \Cref{fig:correct-double-round} intermediate
  is contained in FP64, under all seven rounding modes.
Neither the arithmetic nor its round-to-odd emulation depends on
  the target format; only the final re-rounding does.
\name supports only the arithmetic operations
  required by IEEE 754,
  but could extend RTO-based math libraries
  such as RLIBM~\cite{rlibm} to support
  transcendental operations.

\subsection{Performance of MPFX}
\label{ssec:eval:eval}

We evaluate the performance of \name's correctly-rounded operations
  against existing number libraries that also guarantee correct rounding:
  MPFR, SoftFloat, and FloppyFloat.
We also compare against CPFloat~\cite{cpfloat},
  which differs from these libraries in two ways.
First, its interface operates on arrays rather than scalar values;
  to match the others, we call it with one element per operation
  and discuss its array interface separately.
Second, it re-rounds a round-to-nearest intermediate,
  so it is correctly rounded only where double rounding is innocuous:
  for round-to-nearest targets and the five IEEE 754 operations
  with $p \leq p_h/2 - 1$, at most 25 bits in binary64~\cite{figueroa, roux-double-round, rump-double-round}.
Its RTP and RTZ results in \Cref{fig:eval:overhead} carry no such guarantee.

Of \name's four arithmetic backends (\Cref{ssec:eval:impl}),
  the native-floating-point and EFT backends are the fastest;
  we evaluate these two and denote them \name/HW and \name/EFT.
The native-floating-point backend is slowed by the control-register
  manipulation its round-to-odd emulation requires, but its hardware
  arithmetic still makes it faster than \name's software-library backends.

We performed all experiments on \EvalOS with
  an \EvalCPU CPU and \EvalMem GB of RAM.
We compiled all libraries with \EvalCC at \texttt{-O2};
  MPFR \EvalMPFR was available as a package in \EvalOS.
We used SoftFloat \EvalSoftfl and
  the most recent version of FloppyFloat available\footnote{%
  \url{https://github.com/not-chciken/FloppyFloat/commit/124f2a3636cce89ffe59ac3cca1406e1740507c9}.
  }.
We used CPFloat \EvalCPFloat.

\emph{Methodology}.
We restrict to the shared features of each number library.
First,
  we focus on IEEE 754 required arithmetic:
  addition, subtraction, multiplication, division,
  square root, and FMA.
This eliminates transcendental functions
  (e.g., $\exp(x)$ and $\sin(x)$), and 
  algebraic operations (e.g., $\sqrt[3]{x}$),
  which only MPFR provides.
Second,
  these operations are \emph{uniform-precision}---that is,
  the arguments and result are the same number format---as
  only MPFR and \name are multi-precision.
We evaluate single-precision (FP32) and half-precision (FP16),
  the standard formats that our baselines support and that are narrow
  enough for \name to target by re-rounding from its FP64 working precision.
FloppyFloat relies on FP16 support
  introduced in the C++23 standard~\cite{cpp23}.
We exclude this treatment since on our machine,
  GCC\footnote{%
  Clang does not support FP16,
    so it fails to even compile FloppyFloat.
  } compiles FP16 operations to its own software
  floating-point implementation, as most x86 CPUs
  lack hardware half-precision arithmetic.
Last,
  we compare over three rounding modes:
  RNE, RTP, and RTZ.

For each operation and rounding mode,
  we sampled \EvalCmpNumInputs inputs from
  the uniform distribution on $[-1, 1]$ for each argument.
We then measured the execution time
  of evaluating over all the inputs in
  a tight loop for each of the number libraries,
  averaging over \EvalCmpNumIters iterations
  to reduce variance.
Finally,
  we computed the relative speedup over
  the corresponding SoftFloat implementation.
We take SoftFloat as the baseline because it is the standard,
  portable software implementation of correctly-rounded
  IEEE 754 arithmetic, with reasonable performance.

\begin{figure}[ht!]
\includegraphics[width=\textwidth]{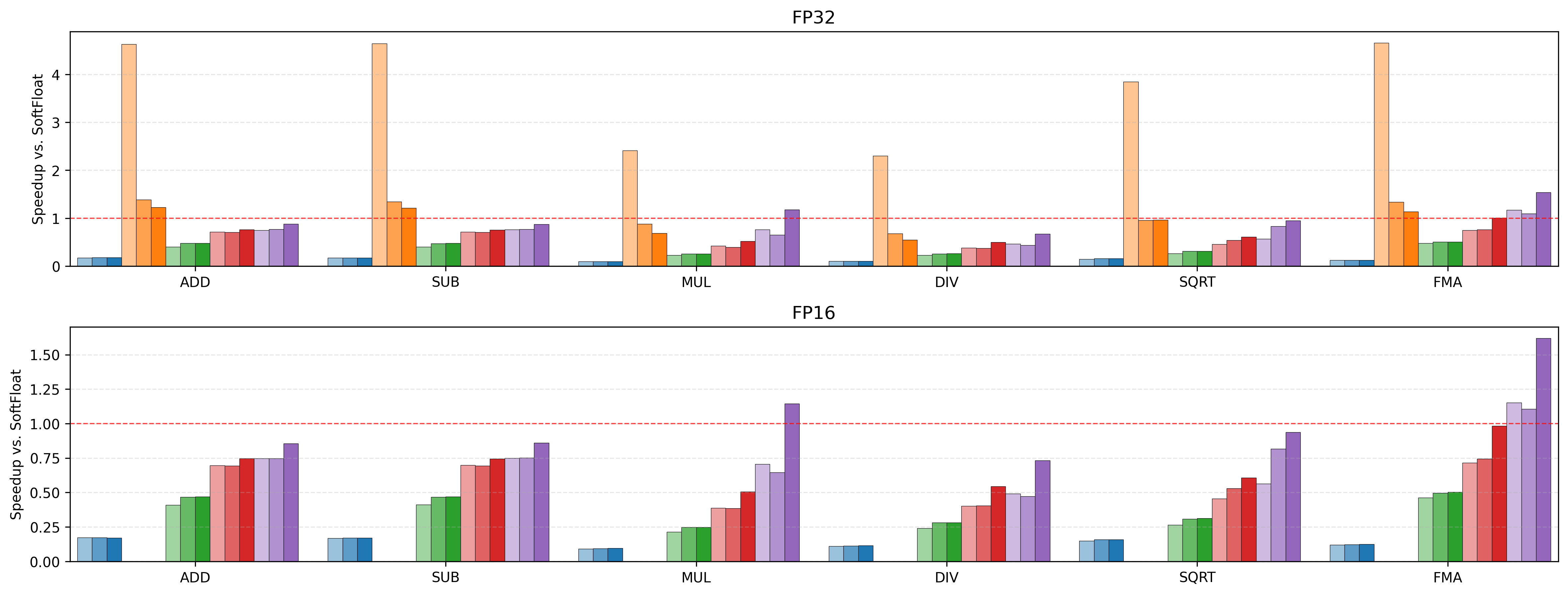}
\caption{%
Relative speedup over SoftFloat for each operation, datatype,
  and rounding mode.
The top and bottom panels show single-precision (FP32) and
  half-precision (FP16), with the six operations along the $x$-axis.
Within each operation, the bars are, left to right,
  MPFR, FloppyFloat (FP32 only), CPFloat, \name/HW, and \name/EFT, each as
  three bars for the RNE, RTP, and RTZ modes; the dashed line at
  $y=1$ is the SoftFloat baseline.
}
\label{fig:eval:overhead}
\end{figure}

\emph{Results}.
\Cref{fig:eval:overhead} shows the results of the comparison.
Being arbitrary-precision,
  MPFR is much slower than the other libraries,
  since managing variable-width significands adds overhead to every operation.
FloppyFloat is faster than SoftFloat
  for single-precision operations (FP32),
  especially for nearest rounding, since it
  uses floating-point rather than integer operations
  to model IEEE 754 arithmetic.
Our results resemble those reported by \citet{floppyfloat},
  but FloppyFloat is slower than SoftFloat for
  certain operators under RTP and RTZ.
We exclude FloppyFloat from the half-precision comparison
  for the reason given in the methodology.
With one element per call,
  CPFloat is slower than \name on every operation,
  by $1.6\times$ to $4.7\times$.
Called on arrays of \EvalCmpNumInputs inputs, it is instead
  $4.6\times$ to $12.2\times$ faster than \name,
  because it derives the target format's parameters
  (subnormal thresholds, overflow bound, significand masks)
  once per call rather than once per element.
The difference reflects API granularity, not the cost of rounding;
  a batched \name interface could amortize the same work.

In answer to \textbf{RQ2},
  \name is competitive with SoftFloat 
  despite being a compact library that is not highly optimized;
  across the operations, datatypes, and rounding modes we measure,
  the faster of its native-floating-point and EFT implementations runs between
  $\EvalMinSpeedup$ and $\EvalMaxSpeedup$ as fast as SoftFloat's,
  and up to $\SpeedupOverMPFR$ faster than MPFR.
The slowest of these is \name/EFT on FP32 division under RTP,
  and the fastest is \name/EFT on FP16 FMA under RTZ.
Because \name's native floating-point arithmetic (\name/HW)
  relies on modifying architectural state,
  we expect the results for that treatment to vary
  the most across different hardware architectures
  and generations.
This comparison uses only formats the baselines support;
  \Cref{sec:case_study} covers a non-standard one.

\section{Case Study: MX Dot Product}
\label{sec:case_study}


In \Cref{sec:overview},
  we motivated the need to implement
  the MX dot product algorithm using existing software operations;
  in \Cref{sec:analysis}, we introduced a strategy for mapping
  unavailable operations onto available arithmetic.
We now apply that strategy to construct a faithful numerical model
  of the MX dot product.
The program in \Cref{fig:overview:mx_spec} (left) is \emph{fully-specified}:
  every input yields a unique result.
We assume each block's 32 elements are finite values of
  the 8-bit MX E5M2 format, i.e., $\A(3, -16, 57344)$;
  following the standard~\cite{ocp-mx},
  each scale factor is a power of two in $[2^{-127}, 2^{127}]$.
The local dot product for each block pair (lines 6--9) is computed exactly;
  accumulation across pairs (line 10) is rounded to FP32 under RNE
  after each step.

\begin{wrapfigure}{r}{0.45\textwidth}
\centering
\vspace{-\baselineskip} 
\includegraphics[width=0.43\textwidth]{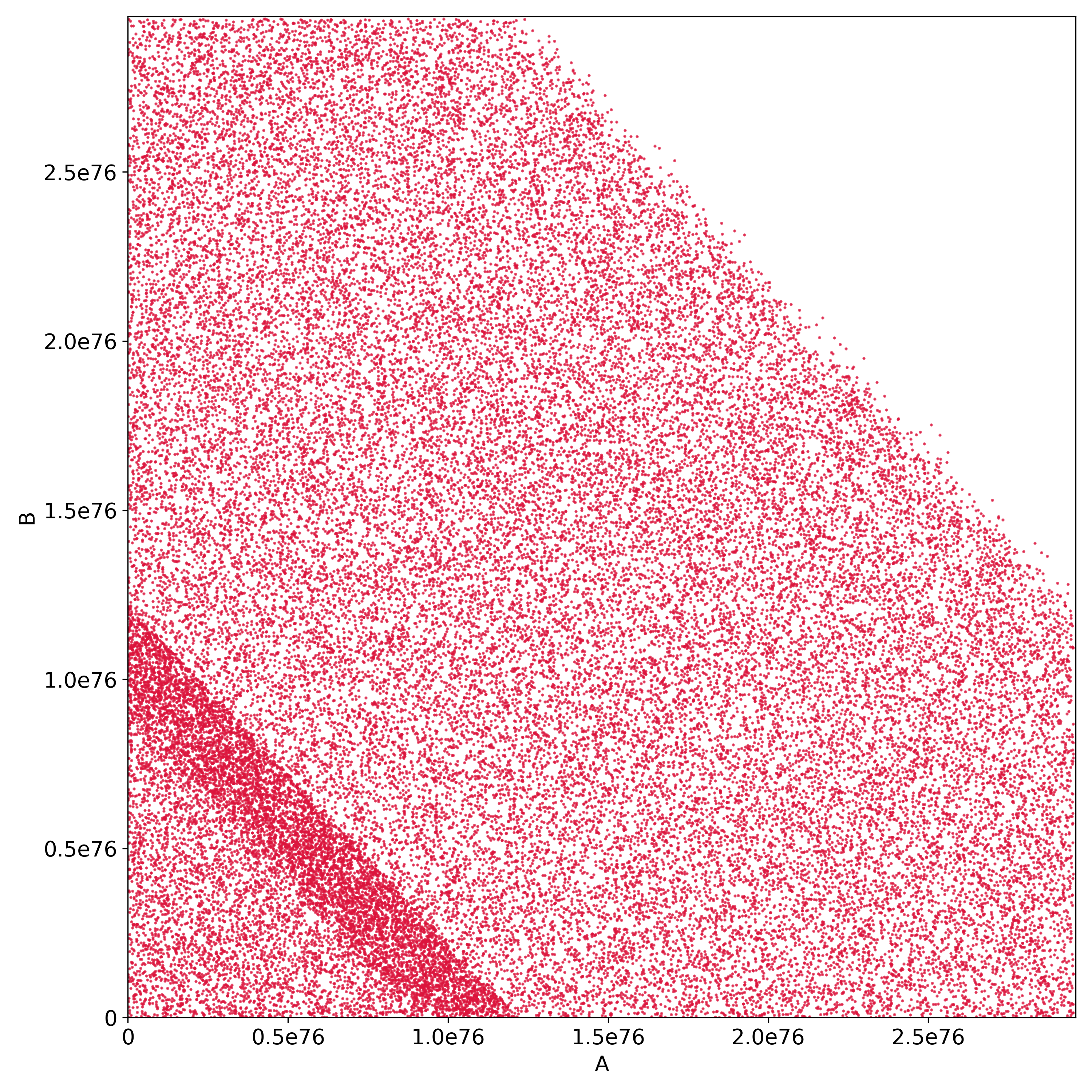}
\caption{%
Mismatches between the specification in \Cref{fig:overview:mx_spec} (left)
  and a naive implementation that runs the entire dot product uniformly in FP32.
}
\label{fig:eval:mismatch}
\vspace{-\baselineskip} 
\end{wrapfigure}

One might ask whether faithfully modeling the specification is necessary at all---%
  could a naive implementation, running the entire dot product uniformly in FP32, suffice?
\Cref{fig:eval:mismatch} answers this question:
  we ran this naive implementation on a million randomly sampled MX block pairs
  (with $C = 0$) and plotted each disagreement as a point,
  using the encodings of blocks $A$ and $B$ as coordinates\footnote{%
  Each block encodes to a 264-bit integer: 8 bits for the scale factor
    and 256 bits for the 32 elements.
  }.
The implementation produces incorrect results
  on roughly \CaseStudyMismatches of the inputs.
The absence of error in the upper right corner of
  the plot corresponds to overflow and the dense
  band in the lower left corresponds to underflow;
  otherwise, the errors are distributed uniformly across the input space.
FP32 is simply insufficient to faithfully model the specification,
  motivating the more careful approach of \Cref{sec:analysis}.
%
%
%
Next,
  we show how to properly map the algorithm onto SoftFloat's
  IEEE 754 operations, then highlight how \name's richer primitives
  yield a much faster implementation.

\subsubsection*{Mapping to SoftFloat}

At first glance,
  it seems unclear how to map each operation of
  the program to rounded operations supported
  by SoftFloat (or native hardware).
In addition,
  we do not know what number formats can safely store
  any intermediate computations.
As an initial step,
  we apply \emph{format inference} to
  bound each expression by a number format.
\Cref{fig:eval:mx_dot_prod_inferred} shows the result.

\begin{figure}[ht!]
\begin{minted}[escapeinside=||]{Python}
def accurate_mx_dot(|$A$|, |$B$|):
  |$C$| = 0
  for (|$X^{A}_i$|, |$P^{A}_i$|), (|$X^{B}_i$|, |$P^{B}_i$|) in zip(|$A$|, |$B$|):
    |$X$| = |$\text{mul}_{\R,\_}(X^{A}_i, X^{B}_i)$|  |{\scriptsize\textcolor{RoyalBlue}{$\A(1, -127, 2^{127}) \otimes \A(1, -127, 2^{127}) = \A(2, -254, 2^{254})$}}|
    |$V$| = 0
    for |$P^{A}_{i,j}$|, |$P^{B}_{i,j}$| in zip(|$P^{A}_i$|, |$P^{B}_i$|):
      |$t_1$| = |$\text{mul}_{\R,\_}(P^{A}_{i,j}, P^{B}_{i,j})$|  |{\scriptsize\textcolor{RoyalBlue}{$\A(3, -16, 7 \times 2^{13}) \otimes \A(3, -16, 7 \times 2^{13}) = \A(6, -32, 49 \times 2^{26})$}}|
      |$V$| = |$\text{add}_{\R,\_}(V, t_1)$|  |{\scriptsize\textcolor{RoyalBlue}{$\bigoplus_{j=1}^{32} \A(6, -32, 49 \times 2^{26}) = \A(69, -32, 1568 \times 2^{26})$}}|
   |$t_2$| = |$\text{mul}_{\R,\_}(X, V)$|  |{\scriptsize\textcolor{RoyalBlue}{$\A(2, -254, 2^{254}) \otimes \A(69, -32, 1568 \times 2^{26}) = \A(71, -286, 1568 \times 2^{280})$}}|
   |$C$| = |$\text{add}_{\text{FP32},\text{RNE}}(C, t_2)$|  |{\scriptsize\textcolor{RoyalBlue}{$\rightarrow \A(24, -149, (2^{24}-1) \times 2^{104})$ (FP32)}}|
  return |$C$|
\end{minted}
\caption{%
The specification of \Cref{fig:overview:mx_spec} (left),
  annotated in blue with the format inferred for each intermediate:
  operand formats combine via $\otimes$ (multiplication) or
  $\oplus$ (accumulation).
The inner loop is unrolled for the analysis,
  so line~8's $\oplus$ spans all $32$ products.
Line~10's FP32 format is set by the specification, not inferred.
}
\label{fig:eval:mx_dot_prod_inferred}
\end{figure}

Each scale factor lies in $\A(1, -127, 2^{127})$
  and each block element in the finite values of E5M2, $\A(3, -16, 7 \times 2^{13})$;
  format inference propagates these bounds through the program
  to derive the rest\footnote{%
  For the scale-factor product, the resulting format could have precision
    $p = 1$ rather than $p = 2$.}.
The accumulator $V$ is the interesting case:
  its precision reaches $p = 69$, the precision of its largest value
  $1568 \times 2^{26} = c \times 2^{-32}$ (where $1568 = 32 \times 49$)
  under the quantum $2^{-32}$---already beyond double precision ($p = 53$).
After scaling by $X$, the local dot product needs
  $\A(71, -286, 1568 \times 2^{280})$.

Recall that the SoftFloat library
  only supports IEEE 754 floating-point number formats---%
  for example, it has no (unrounded) multiplication implementation (line 7)
  with the signature
\[
  \A(3, -16, 7 \times 2^{13})
  \times \A(3, -16, 7 \times 2^{13})
  \rightarrow \A(6, -32, 49 \times 2^{26}).
\]
We can overcome this issue by finitization (\Cref{ssec:analysis:compile}),
  promoting to a supported format that contains the inferred one.
For the multiplication,
  both the input and resulting number formats
  satisfy $F \subseteq \text{FP32}$,
  so we can promote both input and output types
  to FP32:
\[
\text{mul}_{\text{FP32},\text{RNE}}(
  \rnd_{\text{FP32}, \text{RNE}}(P^{A}_{i,j}),
  \rnd_{\text{FP32}, \text{RNE}}(P^{B}_{i,j})).
\]
Here,
  $\rnd_{\text{FP32}, \text{RNE}}$ coerces $P^{A}_{i, j}$
  and $P^{B}_{i,j}$ into a single-precision floating-point number.
The conversion is lossless,
  so any rounding mode suffices; we choose RNE.
Now,
  the multiplication is a uniform,
  single-precision, floating-point multiplication
  which SoftFloat---and native hardware---provide.

For the rest
  of the program, we continue promoting the number formats
  produced by format inference.
Whenever the resulting number format of
  an operation is larger than the arguments,
  we promote the arguments to the larger type.
The result of the process is
  shown in \Cref{fig:eval:mx_dot_prod_sf}.

\begin{figure}[ht!]
\begin{minted}[escapeinside=||]{Python}
def our_mx_dot_softfloat(|$A$|, |$B$|):
  |$C$| = 0
  for (|$X^{A}_i$|, |$P^{A}_i$|), (|$X^{B}_i$|, |$P^{B}_i$|) in zip(|$A$|, |$B$|):
    |$X$| = |$\hwop{\text{mul}_{\text{FP64},\text{RNE}}}(\hwop{\text{rnd}_{\text{FP64},\text{RNE}}}(X^{A}_i), \hwop{\text{rnd}_{\text{FP64},\text{RNE}}}(X^{B}_i))$|
    |$V$| = 0
    for |$P^{A}_{i,j}$|, |$P^{B}_{i,j}$| in zip(|$P^{A}_i$|, |$P^{B}_i$|):
      |$t$| = |$\hwop{\text{mul}_{\text{FP32},\text{RNE}}}(\hwop{\text{rnd}_{\text{FP32},\text{RNE}}}(P^{A}_{i,j}), \hwop{\text{rnd}_{\text{FP32},\text{RNE}}}(P^{B}_{i,j}))$|
      |$V$| = |$\nlop{\text{add}_{\text{FP128},\text{RNE}}}(V, \nlop{\text{rnd}_{\text{FP128},\text{RNE}}}(t))$|
    |$t_2$| = |$\nlop{\text{mul}_{\text{FP128},\text{RNE}}}(X, V)$|
    |$t_3$| = |$\nlop{\text{add}_{\text{FP128},\text{RTO}}}(t_2, \nlop{\text{rnd}_{\text{FP128},\text{RNE}}}(C))$|
    |$C$| = |$\nlop{\text{rnd}_{\text{FP32},\text{RNE}}}(t_3)$|
  return |$C$|
\end{minted}
\caption{%
Our implementation of the dot product
  using \nlop{SoftFloat} or \hwop{hardware-supported}
  floating-point operations.
This program is semantically equivalent 
  to the original specification.
}
\label{fig:eval:mx_dot_prod_sf}
\end{figure}

The program uses a mix of number formats,
  but every operation is uniform-precision,
  with added casts where necessary.
Some operations can be supported in hardware,
  such as the double-precision multiplication
  of the scale factors, and the single-precision
  multiplication of the elements.
To support the largest number format,
  $\A(71, -286, 1568 \times 2^{280})$,
  we use FP128, or quad-precision floating-point,
  from SoftFloat, which is a very large number format 
  whose operations are slower than those of smaller formats.
We implement the final rounded addition
  using a correct double rounding,
  first rounding to FP128,
  and then re-rounding to FP32 to update $C$.

Our approach is essential for implementing
  the dot product faithfully.
Format inference identifies
  number formats for every expression,
  and helps rewrite the program to only use supported operations
  by promoting number formats where necessary.
The result of this process is a program
  that matches the intended numerical behavior
  without requiring the developer to create
  bespoke implementations of arithmetic operations.
However,
  a limitation with a library like SoftFloat
  is that it provides only a small set of primitives,
  so some operations must use larger-than-necessary types,
  which can incur a performance penalty.

\subsubsection*{Going Faster with \name}

Compared to SoftFloat,
  \name offers a very large set of primitives,
  composed of different arithmetic methods and
  re-rounding to a variety of number formats.
For users,
  this flexibility allows them to choose
  the arithmetic method and target format,
  making it easier to implement correctly-rounded operations
  and to express the number format they actually want.
Does this flexibility lead to a more performant dot product
  implementation?

We target the slowest part of the SoftFloat design:
  the accumulation in FP128.
Recall that format inference derives
  $\A(69, -32, 1568 \times 2^{26})$
  as the number format for $V$.
Rather than selecting FP128,
  we choose a fixed-point format, $\A(\infty, -32, 2^{95} - 2^{-32})$,
  which can be encoded in a 128-bit integer with
  an implicit scaling factor of $2^{-32}$ (most C/C++ compilers support 128-bit integers through
  non-standard extensions\footnote{
  GCC and Clang call the 128-bit integer type \texttt{\_\_int128}.
  }).
The accumulation can now be performed
  using standard integer arithmetic, which is
  much faster than SoftFloat's addition operation.

Unfortunately,
  the scaling factor complicates our plan.  
Post-scaling,
  the local dot product is representable
  in $\A(71, -286, 1568 \times 2^{280})$;
  a valid fixed-point format for this value
  would need to be backed by a 1024-bit integer,
  at minimum; the scale factor, $X$, would require
  a 512-bit integer.
The workaround is to split $V$ (\Cref{ssec:analysis:compile})
  into $V_{hi} + V_{lo}$ with $k = 40$:
  since only 69 digits of the 128-bit integer backing $V$
  are significant, each part fits a 64-bit integer,
  $V_{hi}$ with scale factor $2^{-32+40}$
  and $V_{lo}$ with scale factor $2^{-32}$.

This decomposition allows us to rewrite
  $(X \times_{\R} V) +_{\text{FP32}, \text{RNE}} C$
  into a three-argument addition:
\[
\texttt{add3}_{\text{FP32},\text{RNE}}(
    \text{mul}_{\R,\_}(X, V_{hi}),
    \text{mul}_{\R,\_}(X, V_{lo}),
    C
)
\]
The remaining question is how to
  implement the two new factors
  $X \times_{\R} V_{hi}$ and $X \times_{\R} V_{lo}$.
If we split with 40 digits,
  we can infer that $V_{hi}$ must be contained in
  $\A(29, 8, 1568 \times 2^{26})$
  and $V_{lo}$ is contained in
  $\A(40, -32, 2^{8})$.
Then $X \times_{\R} V_{hi}$ and $X \times_{\R} V_{lo}$
  are representable in $\A(31, -246, 1568 \times 2^{280})$
  and $\A(42, -286, 2^{262})$, respectively.
Crucially,
  all of these number formats are contained
  in a double-precision floating-point format.
Therefore,
  we can convert $V_{hi}$ and $V_{lo}$
  to double-precision floating-point values
  and perform the multiplication using
  double-precision arithmetic uniformly.

\begin{figure}[ht!]
\begin{minted}[escapeinside=||]{Python}
def our_mx_dot_mpfx(|$A$|, |$B$|):
  |$C$| = 0
  for (|$X^{A}_i$|, |$P^{A}_i$|), (|$X^{B}_i$|, |$P^{B}_i$|) in zip(|$A$|, |$B$|):
    |$X$| = |$\hwop{\text{mul}_{\text{FP64},\text{RNE}}}(\hwop{\text{rnd}_{\text{FP64},\text{RNE}}}(X^{A}_i), \hwop{\text{rnd}_{\text{FP64},\text{RNE}}}(X^{B}_i))$|
    |$V$| = 0
    for |$P^{A}_{i,j}$|, |$P^{B}_{i,j}$| in zip(|$P^{A}_i$|, |$P^{B}_i$|):
      |$t$| = |$\hwop{\text{mul}_{\text{FP32},\text{RNE}}}(\hwop{\text{rnd}_{\text{FP32},\text{RNE}}}(P^{A}_{i,j}), \hwop{\text{rnd}_{\text{FP32},\text{RNE}}}(P^{B}_{i,j}))$|
      |$V$| = |$\hwop{\text{add}_{\A(\infty, -32, 2^{95} - 2^{-32}),\text{RNE}}}(V, \nlop{\text{rnd}_{\A(\infty, -32, 2^{95} - 2^{-32}),\text{RNE}}}(t))$|
    |$V_{hi}, V_{lo}$| = |$\nlop{\text{split}}(V, 40)$|
    |$t_2$| = |$\hwop{\text{mul}_{\text{FP64},\text{RNE}}}(X, \nlop{\text{rnd}_{\text{FP64},\text{RNE}}}(V_{hi}))$|
    |$t_3$| = |$\hwop{\text{mul}_{\text{FP64},\text{RNE}}}(X, \nlop{\text{rnd}_{\text{FP64},\text{RNE}}}(V_{lo}))$|
    |$C$| = |$\nlop{\text{add3}_{\text{FP32},\text{RNE}}}(t_2, t_3, C)$|
  return |$C$|
\end{minted}
\caption{%
Our implementation of the dot product
  using \nlop{\name} or \hwop{hardware-supported}
  floating-point operations.
This program is semantically equivalent 
  to the original specification.
}
\label{fig:eval:mx_dot_prod_ours}
\end{figure}

The full implementation is shown
  in \Cref{fig:eval:mx_dot_prod_ours}
  using a mix of native integer and floating-point operations
  and \name's primitives.
By replacing FP128 accumulation with                                                                                                                            
  128-bit integer arithmetic
  and using native FP64 for the remaining multiplications,
  the implementation avoids SoftFloat's
  most expensive operations entirely.
The split and three-argument adder from \name
  ensure the final rounding is still correct.

\subsubsection*{Comparison}

We repeated the same approach, choosing
  two additional combinations of number formats
  for the element vectors $P^{A}_{i}$ and $P^{B}_{i}$:
  $\text{E5M2}\times\text{E4M3}$,
  and $\text{E4M3}\times\text{E4M3}$.
For each configuration,
  we developed a SoftFloat implementation and an \name implementation
  using the previously-described techniques.
We sampled 100,000 input vectors of length 4096
  from a standard normal distribution and
  applied block quantization~\cite{ocp-mx} to each vector
  to form MX blocks.
We then measured the run time of each program
  on these MX blocks and computed
  the relative speedup of the \name implementation
  over the SoftFloat implementation.

\begin{table}[ht!]
\caption{%
Relative speedup of \name's implementations compared to SoftFloat's implementations
  for three different combinations of number formats for the element vectors.
}
\begin{tabular}{l|l|l|l}
\textbf{Formats} &
  $\text{E5M2}\times\text{E5M2}$ &
  $\text{E5M2}\times\text{E4M3}$ &
  $\text{E4M3}\times\text{E4M3}$ \\
\hline
\textbf{Speedup} & $9.99\times$ & $13.26\times$ & $1.40\times$
\end{tabular}
\label{fig:eval:speedup}
\end{table}

\Cref{fig:eval:speedup} shows our results.
For the worked example,
  labeled $\text{E5M2}\times\text{E5M2}$ in the table,
  we achieve a $9.99\times$ speedup over our
  SoftFloat implementation.
For $\text{E5M2}\times\text{E4M3}$,
  the \name implementation uses 64-bit integers
  instead of 128-bit integers,
  but the overall approach is similar,
  including splitting the accumulator 
  and the three-argument adder;
  we achieve a greater speedup of $13.26\times$.
For $\text{E4M3}\times\text{E4M3}$,
  format inference proves that we can
  use double-precision arithmetic uniformly
  throughout the inner dot product, including scaling.
This observation applies to both
  the SoftFloat and \name implementation.
The speedup for this configuration is smaller:
  the only difference between the programs
  is the implementation of the final adder.

\subsubsection*{Discussion}
In this case study, we show how the techniques from \Cref{sec:analysis}
  can be applied to a real-world algorithm
  using existing library operations. 
Without these techniques, it would be difficult to faithfully implement 
  the algorithm over a non-standard format, answering the first part of 
  \textbf{RQ3}.
Further,
  \name's support for non-standard formats and multi-precision
  operations allows us to exploit cheaper arithmetic, 
  which is inexpressible in SoftFloat, answering the second part of
  our research question.
Underlying both is the abstract number format $\A$:
  format inference uses it to determine
  what each expression requires,
  and \name uses it to round to exactly those formats.

\section{Limitations and Future Work}
\label{sec:future}


We would like to support additional number formats.
Posit numbers~\cite{posit} use a dynamic rather than a fixed precision
  that varies with the magnitude of the representable value.
Strictly positive formats, such as the MX E8M0 format~\cite{ocp-mx}
  used for block scales, are not directly representable in $\A$.
The most recent draft of the IEEE P3109 standard~\cite{p3109}
  introduces unsigned floating-point numbers;
  we believe setting $b^{-} = 0$ would suffice to support them,
  but have not yet verified this.


%

Our formats have a single zero.
IEEE 754 distinguishes $+0$ and $-0$;
  modeling signed zero would let our rules account for its sign.
Subnormals pose a challenge:
  floating-point implementations often allow disabling subnormalization,
  flushing subnormal values to zero.
We would like to investigate how flushing affects
  format containment and correct double rounding.

Format inference (\Cref{ssec:analysis:infer}) covers straight-line programs;
  loops with a static bound can be unrolled,
  as in our case study (\Cref{sec:case_study}).
Supporting loops with dynamic bounds requires,
  as in interval analysis, a widening operator,
  since formats have infinite ascending chains under containment;
  the resulting formats may be impractically wide.

Finally,
  we plan to improve \name.
Since it supports single-precision containers,
  we can integrate the RLIBM~\cite{rlibm} math library
  to support transcendental functions:
  its operations are rounded to odd
  and can be re-rounded to single precision.
We also plan to adopt CPFloat's optimized rounding algorithms~\cite{cpfloat},
  which target exactly \name's rounding primitives.


\section{Related Work}
\label{sec:related}

Our work focuses on number libraries,
  floating-point formalizations,
  double rounding, and rounding to odd.
We discuss related work in each of these areas.


\emph{Number libraries}.
\name is most similar to CPFloat~\cite{cpfloat};
  CPFloat uses hardware floating-point arithmetic for higher precision
  computation and then optimized rounding kernels to round to the target format.
In particular,
  they support IEEE 754-like number formats:
  $\A(p, e_{\text{min}} - p + 1, (2 - 2^{1 - p}) \times 2^{e_{\text{max}}})$,
  and also offer stochastic rounding, which \name lacks.
The focus of the work concerns rounding algorithms:
  for basic arithmetic operations,
  they appeal to strong assumptions for correct rounding
  found in prior work~\cite{rump-double-round,roux-double-round}.
MXSim~\cite{mxsim} builds on CPFloat to simulate MX-like
  block floating-point formats with customizable
  element, scale, and accumulator formats,
  and requires the same assumptions for correct rounding.
FloatX~\cite{floatx} employs a similar technique,
  but stores results in a custom container type.
They note that their double roundings
  may not be correctly rounded.

Most number libraries do not follow
  the double rounding architecture of \name.
The SoftFloat library~\cite{softfloat} emulates
  IEEE 754 floating-point arithmetic using only machine integer operations,
  providing highly optimized implementations of basic arithmetic operations.
FloppyFloat~\cite{floppyfloat} supports a subset
  of SoftFloat's operations.
The library instead computes the rounded result under RNE
  using hardware floating-point arithmetic, and fixes up
  the result for the desired rounding mode.
The SIPE library~\cite{sipe} represents numbers
  as a pair of integers (significand and exponent) and
  simulates efficient, low-precision, floating-point arithmetic;
  underflow and overflow are not modeled.
Like SoftFloat and FloppyFloat,
  the operations provided by SIPE are uniform-precision:
  the precision of the arguments and result is the same.
The Universal library~\cite{universal,universal2} is an extensive
  compendium of fixed-point, floating-point, posit, and logarithmic
  formats, among others.
Supporting them with the traditional approach is, as the developers note,
  a ``major undertaking with complex validation requirements''~\cite{universal}.

Finally,
  arbitrary-precision number libraries allow
  theoretically unbounded precision.
Because of their unbounded nature,
  their implementations must allocate memory
  to represent numbers.
The state-of-the-art
  arbitrary-precision floating-point library is MPFR~\cite{mpfr},
  which provides correctly-rounded operations for an array
  of arithmetic, algebraic, and transcendental functions.
Arbitrary-precision integer libraries, such as GNU MP~\cite{gmp},
  are arguably number libraries too.


\emph{Floating-point formalization}.
Generic formalizations of floating-point arithmetic
  predate our work.
The PFF library~\cite{pff} formalizes floating-point numbers in Coq~\cite{coq}
  over an arbitrary radix as (significand, exponent) pairs whose exponent
  is bounded below but \emph{unbounded above},
  deferring overflow to a separate, final step.
\citet{harrison} gives a comparable radix-two theory in HOL Light that
  carefully separates a number's real \emph{value} from its bit-string
  \emph{encoding}; like PFF, it rounds into a format unbounded above and
  detects overflow afterward, mirroring how we round into $\hat{F}$
  before applying the bound $b$ (\Cref{ssec:bg:rounding}).
Both presentations target error analysis and the verification of arithmetic
  and hardware rather than format containment or double rounding,
  and neither treats fixed-point as a first-class format.
\citet{jacobsen} extend Harrison's theory to an arbitrary radix and glue
  fixed- and floating-point numbers together to model IEEE 754 normals and
  subnormals; our format instead unifies them as one family
  $\A(p, exp, b)$, with fixed-point arising when $p = \infty$.

The Flocq library~\cite{flocq} is a more recent and comprehensive Coq
  formalization~\cite{coq} whose generic format---parameterized by a radix
  and a canonical-exponent function---likewise covers fixed- and
  floating-point formats with optional subnormalization.
Our abstract number format corresponds to this generic format
  with an added bound and special values (\Cref{ssec:format:abstract});
  we use its particular parameterization to decide format containment
  and correct double rounding.
More recently, \citet{flops} give a Lean formalization of the draft IEEE
  P3109 standard for low-precision machine-learning formats.
Like ours, their work is mechanized in Lean and must grapple with the
  degeneracies of one-bit-precision formats; in contrast, it aims to
  faithfully specify a single standard and its algorithms rather than
  reason about format containment or double rounding across formats.

\emph{Double rounding}.
The seminal work on correct double rounding is by \citet{boldo-rto}.
They prove that round to odd (RTO) and $p + 2$ digits of precision
  ensures correct rounding to $p$ digits under the RNE mode.
Like theirs,
  our rules place conditions on the formats alone,
  so they apply to any real number and thus any function;
  \Cref{ssec:round:safe} credits the prior work behind each rule.
\citet{figueroa} gives precision conditions under which RNE double rounding
  is innocuous for each basic arithmetic operation,
  roughly $2p$ digits for a $p$-digit target;
  \citet{roux-double-round} formalizes these conditions in Coq,
  including underflow and radices other than two, and
  \citet{rump-double-round} extends them to directed rounding
  and special values such as signed zero.
\citet{martin-dorel} study the converse problem:
  how unavoidable double rounding affects algorithms such as 2Sum.

\emph{Round to odd}.
\citet{boldo-rto-apps} present a few formally-verified
  applications of the round to odd rounding mode
  including a fused-multiply-add (FMA) algorithm
  and correctly-rounded summation.
The RLIBM math library~\cite{rlibm, rlibm-all} consists
  of automatically-generated polynomial approximations
  of transcendental functions that are rounded to odd,
  allowing correct double rounding to single precision
  under any of the standard rounding modes.
Our work applies similar techniques to number libraries,
  an approach mentioned but not explored in \citet{floatx}.
We were particularly inspired by RLIBM's philosophy
  of focusing on representable values rather than real numbers
  in our formalism and applications.

\emph{Error analyzers}.
A separate line of work builds tools that automatically bound the
  round-off error of floating-point computations.
We do not compute error bounds, but these tools share our reasoning about
  when an operation is exact.
VCFloat~\cite{vcfloat, vcfloat2} proves sound round-off error bounds in a
  proof assistant and recognizes exact operations, such as Sterbenz
  subtraction~\cite{handbook} and multiplication by powers of two, to tighten them.
Gappa~\cite{gappa} certifies bounds over both fixed- and floating-point
  rounded operators, detecting exact operations with the same two
  properties our format $\A(p, exp, b)$ unifies: that a value is an integer
  multiple of a quantum, and that its significand fits in $p$ bits.
These tools exploit exactness to tighten error bounds; we exploit it
  through format containment to drop or insert roundings while preserving
  correct rounding.
Type systems can also bound roundoff error:
  Numerical Fuzz~\cite{numfuzz} targets forward error,
  and Bean~\cite{bean}, backward error.

\emph{Rewriting numerical programs}.
Like our rewriting strategy (\Cref{sec:analysis}), verified compilers
  transform numerical programs to map operations onto those available in a
  programming environment.
RealCake~\cite{realcake} applies fast-math optimizations in CakeML,
  proving the optimized result stays within a user-specified error bound
  of the program's real-valued semantics rather than preserving its exact
  value.
Our transformations instead keep the correctly-rounded result exactly.

\emph{Precision tuning}.
Mixed-precision tuning also assigns number formats
  to a program's operations, but toward a different end.
Tools such as Precimonious~\cite{precimonious}, FPTuner~\cite{fptuner},
  Daisy~\cite{daisy}, and POPiX~\cite{popix} search for low-precision
  assignments that minimize cost while keeping the result within an accuracy
  bound, exploring programs with \emph{different} semantics;
  \citet{isychev-lohar} give a recent overview.
Our rewriting strategy (\Cref{sec:analysis}), by contrast, applies only
  \emph{semantics-preserving} rewrites, choosing formats that preserve
  the program's exact result rather than trading accuracy for cost.
The two are complementary:
  a precision tuner could choose a program's roundings,
  and our format containment could then check whether
  that program maps onto the available primitives,
  or decide when two candidate programs are equivalent
  to prune the search.

\section{Conclusion}
\label{sec:conclusion}

We characterize which pairings of rounding modes
  support correct double rounding,
  with easily checkable sufficient conditions on the formats,
  enabled by the abstract number format $\A$---%
  which parameterizes fixed- and floating-point formats---%
  and an efficient, complete check for format containment.
Our work complements prior results that apply
  to specific operations~\cite{rump-double-round, roux-double-round}
  or pairs of rounding modes~\cite{boldo-rto}.
Format containment allows us to reason about correct double rounding at two levels:
  for individual operations, it determines when double rounding preserves correctness;
  for whole programs, it drives a rewriting strategy that maps unsupported operations
  onto available library or hardware primitives.
\name demonstrates that a modular number library based on double rounding
  can be competitive with established libraries like MPFR and SoftFloat,
  despite supporting a far broader class of number formats.
A case study on a real-world machine learning algorithm shows
  that targeting \name's primitives yields faster implementations.
As hardware number formats continue to diversify,
  correct double rounding offers a principled and scalable path to supporting them in software.

\bibliographystyle{ACM-Reference-Format}
\bibliography{references}

\ifdefined\hideappendix\else
\clearpage
\appendix

\section{Proofs}
\label{sec:appendix:proof}

This appendix contains proofs for
  the format containment rules (\refRoundContain),
  the correct double rounding rules (\refRoundSafe),
  and the format inference rules (\refAnalysisInfer).
All of these proofs are mechanized in Lean;
  \Cref{fig:appendix:lean} maps rules and lemmas to their Lean theorems
  (in namespace \texttt{Mpfx}).

\begin{table}[ht]
\caption{Lean theorems for the principal rules and lemmas of Sections 5 and 6.}
\label{fig:appendix:lean}
\small
\begin{tabular}{ll}
\toprule
Result & Lean theorem \\
\midrule
\textsc{$\A$-Contains-Prec}, \textsc{$\A$-Contains-Sub}
  & \texttt{Format.containsPrec}, \texttt{Format.containsSub} \\
\refLemContainComplete & \texttt{Format.subset\_iff\_contains} \\
\refLemKeptDigits & \texttt{FiniteFormat.numDigits} (definition) \\
\refLemKeptCmp & \texttt{FiniteFormat.numDigits\_extend} \\
\refLemRtoRounding & \texttt{IsOdd.transfer\_of\_subset} \\
\textsc{rnd-RTZ-RTZ}, \textsc{rnd-RAZ-RAZ}
  & \texttt{roundsRTZ\_RTZ\_inBound}, \texttt{roundsRAZ\_RAZ\_inBound} \\
\textsc{rnd-RTO-RTO} & \texttt{roundsRTO\_RTO\_inBound} \\
\textsc{rnd-RTO-RTZ}, \textsc{rnd-RTO-RAZ}
  & \texttt{roundsRTO\_RTZ\_inBound}, \texttt{roundsRTO\_RAZ\_inBound} \\
\textsc{rnd-RTO-RNE} (any tie-break) & \texttt{roundsRTO\_RN\_inBound} \\
\quad (both overflow) & \texttt{rounds}$X$\texttt{\_}$Y$\texttt{\_noOverflow} (six theorems) \\
\textsc{rnd-Double} & \texttt{rnd\_double}; per rule, \texttt{rnd}$X$\texttt{\_}$Y$\texttt{\_of\_bound} \\
Counterexamples (\refTabDoubleRoundCex)
  & \texttt{Cex.no\_rnd}$X$\texttt{\_}$Y$ (ten theorems) \\
$\otimes$ and $\oplus$ rules
  & \texttt{Format.mul\_subset}, \texttt{Format.add\_subset} \\
Negation and absolute value
  & \texttt{Format.neg\_subset}, \texttt{Format.abs\_subset} \\
\bottomrule
\end{tabular}
\end{table}

\subsection{Format Containment}
\label{ssec:appendix:proof:contain}

We prove soundness of the two inference rules
  that derive $\A(p_1, exp_1, b_1, S_1) \subseteq \A(p_2, exp_2, b_2, S_2)$,
  and then their completeness (\refLemContainComplete).
Special values are not numbers,
  so this containment holds exactly when $S_1 \subseteq S_2$
  and the numerical values are contained;
  below we prove the latter, writing $\A(p, exp, b)$ for the numerical values.
\textsc{$\A$-Contains-Prec} handles the general case,
  while \textsc{$\A$-Contains-Sub} handles
  degenerate floating-point formats (and bounded fixed-point formats)
  where \textsc{$\A$-Contains-Prec} does not apply.



\begin{proof}[Proof of \textsc{$\A$-Contains-Prec}]
Let $x \in \A(p_1, exp_1, b_1)$.
The precision and magnitude constraints transfer
  to $\A(p_2, exp_2, b_2)$ directly:
  $x$ has precision at most $p_1 \leq p_2$,
  and $|x| \leq b_1 \leq b_2$.
It remains to check the quantum constraint, on which we case split.

If $exp_2 = -\infty$,
  then $\A(p_2, {-\infty}, b_2)$ imposes no quantum constraint,
  and the two facts above give $x \in \A(p_2, {-\infty}, b_2)$.
Otherwise $exp_2 > -\infty$,
  and since $exp_1 \geq exp_2$ we have $exp_1 > -\infty$ as well.
Write $x = (-1)^s \times c \times 2^{exp_1}$
  where $c$ has precision at most $p_1 \leq p_2$.
Since $exp_1 \geq exp_2$, we can rewrite
\[
  x = (-1)^s \times (c \times 2^{exp_1 - exp_2}) \times 2^{exp_2},
\]
so $x$ is an integer multiple of $2^{exp_2}$,
  satisfying the quantum constraint.
Therefore, $x \in \A(p_2, exp_2, \infty)$.
\end{proof}


\begin{proof}[Proof of \textsc{$\A$-Contains-Sub}]
Let $x \in \A(p_1, exp_1, b_1)$.
Since $x$ has quantum at least $2^{exp_1}$,
  write $x = (-1)^s \times c \times 2^{exp_1}$
  for some $c \geq 0$.
From $|x| \leq b_1 \leq 2^{exp_1 + p_2}$
  we obtain $c \times 2^{exp_1} = |x| \leq 2^{exp_1 + p_2}$,
  so $c \leq 2^{p_2}$.
An integer of magnitude at most $2^{p_2}$ has
  a significand of at most $p_2$ digits,
  so $x$ has precision at most
  $p_2$---even when $p_1 > p_2$.
Moreover, since $exp_2 \leq exp_1$, the value
\[
  x = (-1)^s \times (c \times 2^{exp_1 - exp_2}) \times 2^{exp_2}
\]
is an integer multiple of $2^{exp_2}$,
  so it satisfies the quantum constraint.
Finally, $|x| \leq b_1 \leq b_2$,
  and we conclude $x \in \A(p_2, exp_2, b_2)$.
\end{proof}

\begin{proof}[Proof of \refLemContainComplete]
We need to prove that if $F_1 \subseteq F_2$,
  then \textsc{$\A$-Contains-Prec} or \textsc{$\A$-Contains-Sub} applies.
Suppose $F_1 \subseteq F_2$,
  and let $x_0 \neq 0$ be a value of $F_1$,
  which exists since $F_1$ is non-trivial.

\emph{Bound.}
If $b_2 = \infty$, then $b_1 \leq b_2$.
Otherwise, if $b_1$ is finite, then $b_1 \in F_1$
  since $b_1$ is representable in $\A(p_1, exp_1, \infty)$,
  so $b_1 \in F_2$ and $b_1 \leq b_2$.
If $b_1 = \infty$, then $F_1$ contains $2^k$
  for every $k \geq exp_1$, including some $2^k > b_2$,
  which is not in $F_2$; a contradiction.

\emph{Exponent.}
If $exp_2 = -\infty$, then $exp_1 \geq exp_2$.
Otherwise, if $exp_1$ is finite, then
  $2^{exp_1} \leq |x_0| \leq b_1$, so $2^{exp_1} \in F_1 \subseteq F_2$;
  since every value of $F_2$ is a multiple of $2^{exp_2}$,
  $exp_1 \geq exp_2$.
If $exp_1 = -\infty$, then $F_1$ contains $2^k$ for some $k < exp_2$
  with $2^k \leq |x_0|$, which is not a multiple of $2^{exp_2}$;
  a contradiction.

\emph{Precision.}
If $p_1 \leq p_2$, then \textsc{$\A$-Contains-Prec} applies.
Otherwise $p_2 < p_1$, so $p_2$ is finite.
For any $k$, the value $w_k = (2^{p_2} + 1) \times 2^k$
  has an odd significand of $p_2 + 1 \leq p_1$ digits,
  so $w_k \notin F_2$, and hence $w_k \notin F_1$:
  either $k < exp_1$ or $w_k > b_1$.
If $exp_1 = -\infty$, some $w_k \leq |x_0| \leq b_1$ would lie in $F_1$;
  so $exp_1$ is finite.
Finally, suppose $b_1 > 2^{exp_1 + p_2}$.
If $b_1 = \infty$, then $w_{exp_1} \in F_1$ directly;
  otherwise $b_1 = m \times 2^{exp_1}$ for an integer $m > 2^{p_2}$,
  so $w_{exp_1} \leq b_1$ and again $w_{exp_1} \in F_1$.
Either way we have a contradiction,
  so $b_1 \leq 2^{exp_1 + p_2}$.
With the bound and exponent conditions,
  \textsc{$\A$-Contains-Sub} applies.
\end{proof}

Both restrictions on $F_1$ are necessary.
Without a representable bound,
  the formats $\A(\infty, 0, 3/2)$ and $\A(\infty, 0, 1)$ are both $\{0, \pm 1\}$,
  yet neither rule applies, since $3/2 > 1$.
Without non-triviality,
  $\A(\infty, 0, 0) = \A(\infty, 1, 0) = \{0\}$,
  yet neither rule applies, since $0 < 1$.

\subsection{Correct Double Rounding}
\label{ssec:appendix:proof:double_round}

We prove soundness of the rules in \refFigCorrectDoubleRound.
The first six rules establish
  $\langle F_1, rm_1 \rangle \rhd \langle F_2, rm_2 \rangle$
  for rounding modes $rm_1, rm_2$ among RNE, RTZ, RAZ, and RTO:
  we first show that rounding a real number through $F_2$
  matches rounding it directly to $F_1$ when neither overflows,
  and then that the two overflow together.
\textsc{rnd-Double} then extends the result to special values.
Let $x \in \R$ be a real number.
We first show that
  the number of digits that $\rnd_{\A(p_1, exp_1, b_1), rm}(x)$
  rounds to is a function of $p_1, exp_1$, and $x$:


\begin{proof}[Proof of \refLemKeptDigits]
If $x = 0$,
  then the number of digits rounded to is $0$,
  so assume $x \neq 0$ and
  let $e = \lfloor \log_2 |x| \rfloor$ denote the exponent of $x$.
We compute $exp'$, the position of the least significant digit
  that $\rnd_{\A(p, exp, b), rm}(x)$ retains,
  by case analysis on $p$ and $exp$:
\begin{itemize}
  \item If $p = \infty$ and $exp > -\infty$,
    then $\A(p, exp, b)$ is a fixed-point format with quantum $2^{exp}$,
    so $exp' = exp$.
  \item If $p < \infty$ and $exp = -\infty$,
    then $\A(p, exp, b)$ is a floating-point format with precision $p$,
    so $exp' = e - p + 1$.
  \item If $p < \infty$ and $exp > -\infty$,
    then $\A(p, exp, b)$ is a floating-point format with precision $p$
    and minimum quantum $2^{exp}$,
    so $exp' = \max(e - p + 1, exp)$.
\end{itemize}
In every case $exp'$ depends only on $p$, $exp$, and $x$;
  in particular, it is independent of the rounding mode $rm$
  and of the bound $b$.
Hence the number of digits retained,
  $w = e - exp' + 1$,
  is a well-defined function of $p$, $exp$, and $x$ alone.
\end{proof}

Let $\A(p_1, exp_1, b_1)$ be a number format and let $k \geq 0$.
Write $w_1$ and $w_2$ for the number of digits that
  $\rnd_{\A(p_1, exp_1, b_1), rm}(x)$ and
  $\rnd_{\A(p_1 + k, exp_1 - k, b_1), rm}(x)$ round to.
We show $w_2 = w_1 + k$ for every $x \neq 0$.

\begin{proof}[Proof of \refLemKeptCmp]
Fix $x \neq 0$,
  and let $e = \lfloor \log_2 |x| \rfloor$ denote
  the exponent of $x$.
By \refLemKeptDigits,
  we case split on $p_1$ and $exp_1$
  to determine $exp'_1$, the position of
  the least significant digit that rounding
  to $\A(p_1, exp_1, \infty)$ retains:
\begin{itemize}
  \item If $p_1 = \infty$ and $exp_1 > -\infty$,
    then $exp'_1 = exp_1$.
    Since $p_1 + k = \infty$ and $exp_1 - k > -\infty$,
    rounding to $\A(p_1 + k, exp_1 - k, \infty)$ gives
    $exp'_2 = exp_1 - k = exp'_1 - k$.
  \item If $p_1 < \infty$ and $exp_1 = -\infty$,
    then $exp'_1 = e - p_1 + 1$.
    Since $p_1 + k < \infty$ and $exp_1 - k = -\infty$,
    rounding to $\A(p_1 + k, exp_1 - k, \infty)$ gives
    $exp'_2 = e - (p_1 + k) + 1 = exp'_1 - k$.
  \item If $p_1 < \infty$ and $exp_1 > -\infty$,
    then $exp'_1 = \max(e - p_1 + 1, exp_1)$.
    Since $p_1 + k < \infty$ and $exp_1 - k > -\infty$,
    rounding to $\A(p_1 + k, exp_1 - k, \infty)$ gives
    \[
      exp'_2 = \max(e - (p_1 + k) + 1,\, exp_1 - k)
             = \max(e - p_1 + 1,\, exp_1) - k
             = exp'_1 - k.
    \]
\end{itemize}
In all cases $exp'_2 = exp'_1 - k$,
  so $w_2 = e - exp'_2 + 1 = (e - exp'_1 + 1) + k = w_1 + k$.
\end{proof}

In the case that $k \geq 1$,
  rounding under RTO has a special property:




\begin{proof}[Proof of \refLemRtoRounding]
We prove the biconditional, then the interval bound.

\emph{($\Rightarrow$)} If $x$ is representable at precision $w$,
  it is also representable at precision $w + k$
  (pad with $k$ trailing zero digits),
  so RTO leaves it unchanged: $x' = x$.
  Hence $x'$ is representable at precision $w$.

\emph{($\Leftarrow$)} We prove the contrapositive:
  if $x$ is unrepresentable at precision $w$, then so is $x'$.
  If $x$ is representable at precision $w + k$,
    then $x' = x$, which is unrepresentable at precision $w$ by assumption.
  Otherwise $x$ is unrepresentable at precision $w + k$,
    and RTO rounds it to a value with an odd significand.
  Since $k \geq 1$, the $(w + k)$-th digit of $x'$ is nonzero
    and lies below the $w$-th digit,
    so $x'$ is unrepresentable at precision $w$.

For the interval bound, suppose $x$ is unrepresentable at precision $w$
  and let $y_1 < x < y_2$ be its two nearest representable neighbors
  at precision $w$.
If $x' = x$, then $y_1 < x' < y_2$ is immediate.
Otherwise $x$ is unrepresentable at precision $w + k$,
  so $x'$ has an odd significand at precision $w + k$,
  whereas $y_1$ and $y_2$---%
  being representable at precision $w$---%
  have even significands at precision $w + k$
  (their low $k$ digits are zero).
RTO never rounds to an even value,
  so $x' \notin \{y_1, y_2\}$;
  and since RTO rounds $x$ to one of
  its two neighbors at precision $w + k$,
  both of which lie in $[y_1, y_2]$,
  we conclude $y_1 < x' < y_2$.
\end{proof}

We now prove each of the correct double rounding rules.
We'll start with the first three rules where
  both roundings use the same rounding mode.






\begin{proof}[Proof of \textsc{rnd-RTZ-RTZ}]
Write $F_1 = \A(p_1, exp_1, b_1)$ and $F_2 = \A(p_2, exp_2, b_2)$.
Without loss of generality assume $x > 0$; the case $x < 0$ is symmetric,
  and $x = 0$ is fixed by RTZ in both formats.
For a positive argument, RTZ returns the largest representable value of the
  format that does not exceed the argument.

Let $z = \rnd_{F_2, \text{RTZ}}(x)$ and $w = \rnd_{F_1, \text{RTZ}}(z)$,
  so $w \leq z \leq x$ and $w \in F_1$.
It remains to show that $w$ is the largest element of $F_1$ that does not
  exceed $x$.
Let $y \in F_1$ with $y \leq x$.
Since $F_1 \subseteq F_2$, we also have $y \in F_2$,
  so $y \leq z$, as $z$ is the largest element of $F_2$ not exceeding $x$.
Then $y \leq w$, as $w$ is the largest element of $F_1$ not exceeding $z$.
Hence $w$ is the largest element of $F_1$ not exceeding $x$,
  that is, $w = \rnd_{F_1, \text{RTZ}}(x)$.

\end{proof}






\begin{proof}[Proof of \textsc{rnd-RAZ-RAZ}]
Write $F_1 = \A(p_1, exp_1, b_1)$ and $F_2 = \A(p_2, exp_2, b_2)$.
Without loss of generality assume $x > 0$; the case $x < 0$ is symmetric,
  and $x = 0$ is fixed by RAZ in both formats.
For a positive argument, RAZ returns the smallest representable value of the
  format that is at least the argument.

Let $z = \rnd_{F_2, \text{RAZ}}(x)$ and $w = \rnd_{F_1, \text{RAZ}}(z)$,
  so $w \geq z \geq x$ and $w \in F_1$.
It remains to show that $w$ is the smallest element of $F_1$ that is at least
  $x$.
Let $y \in F_1$ with $y \geq x$.
Since $F_1 \subseteq F_2$, we also have $y \in F_2$,
  so $y \geq z$, as $z$ is the smallest element of $F_2$ at least $x$.
Then $y \geq w$, as $w$ is the smallest element of $F_1$ at least $z$.
Hence $w$ is the smallest element of $F_1$ at least $x$,
  that is, $w = \rnd_{F_1, \text{RAZ}}(x)$.

\end{proof}




\begin{proof}[Proof of \textsc{rnd-RTO-RTO}]
Write $F_1 = \A(p_1, exp_1, b_1)$ and $F_2 = \A(p_2, exp_2, b_2)$,
  where $p_2 \geq 2$.
If $x \in F_1$, both roundings fix $x$ and the identity is immediate, so assume
  $x \notin F_1$.

Let $w_1$ and $w_2$ be the number of digits rounded to when rounding $x$ into
  $F_1$ and $F_2$, respectively; since $F_1 \subseteq F_2$, $w_1 \leq w_2$.
Let $z = \rnd_{F_2, \text{RTO}}(x)$ and $w = \rnd_{F_1, \text{RTO}}(z)$.
If $w_1 = w_2$, then $F_2$ rounds $x$ at the same precision as $F_1$, so
  $z = \rnd_{F_1, \text{RTO}}(x) \in F_1$ and $w = z$.
Otherwise $w_1 + 1 \leq w_2$, and by \refLemRtoRounding, $z$ is unrepresentable
  in $F_1$ and bracketed by the same two elements of $F_1$ as $x$; rounding $z$
  in $F_1$ therefore yields the same odd value as rounding $x$.
In either case,
  $w = \rnd_{F_1, \text{RTO}}(x)$.

\end{proof}

Finally,
  we prove the last three rules where the first rounding uses RTO
  and the second rounding uses a different rounding mode.




\begin{proof}[Proof of \textsc{rnd-RTO-RTZ}]
Write $F_1 = \A(p_1, exp_1, b_1)$ and $F_2 = \A(p_2, exp_2, b_2)$.
Without loss of generality assume $x > 0$; the case $x < 0$ is symmetric, and
  $x = 0$ is fixed by both roundings.
If $x \in F_1$, the identity is immediate, so assume $x \notin F_1$.

Let $w_1$ and $w_2$ be the number of digits rounded to when rounding $x$ into
  $F_1$ and $F_2$.
Since $\A(p_1 + 1, exp_1 - 1, b_1) \subseteq F_2$, we have $w_1 + 1 \leq w_2$
  by \refLemKeptCmp.
Let $y_1 < x < y_2$ be the adjacent elements of $F_1$ about $x$, so that
  $\rnd_{F_1, \text{RTZ}}(x) = y_1$.
Set $z = \rnd_{F_2, \text{RTO}}(x)$ and $w = \rnd_{F_1, \text{RTZ}}(z)$.
Since $w_1 + 1 \leq w_2$, \refLemRtoRounding applies: $z$ is unrepresentable in
  $F_1$ and $y_1 < z < y_2$.
Hence $z$ shares the $F_1$-neighbors $y_1, y_2$, and rounding it toward zero
  gives $w = y_1 = \rnd_{F_1, \text{RTZ}}(x)$.

\end{proof}




\begin{proof}[Proof of \textsc{rnd-RTO-RAZ}]
Write $F_1 = \A(p_1, exp_1, b_1)$ and $F_2 = \A(p_2, exp_2, b_2)$.
Without loss of generality assume $x > 0$; the case $x < 0$ is symmetric, and
  $x = 0$ is fixed by both roundings.
If $x \in F_1$, the identity is immediate, so assume $x \notin F_1$.

Let $w_1$ and $w_2$ be the number of digits rounded to when rounding $x$ into
  $F_1$ and $F_2$.
Since $\A(p_1 + 1, exp_1 - 1, \text{next}_{p_1, exp_1}(b_1)) \subseteq F_2$,
  we have $w_1 + 1 \leq w_2$ by \refLemKeptCmp.
Let $y_1 < x < y_2$ be the adjacent elements of $F_1$ about $x$, so that
  $\rnd_{F_1, \text{RAZ}}(x) = y_2$.
Set $z = \rnd_{F_2, \text{RTO}}(x)$ and $w = \rnd_{F_1, \text{RAZ}}(z)$.
Since $w_1 + 1 \leq w_2$, \refLemRtoRounding applies: $z$ is unrepresentable in
  $F_1$ and $y_1 < z < y_2$.
Hence $z$ shares the $F_1$-neighbors $y_1, y_2$, and rounding it away from zero
  gives $w = y_2 = \rnd_{F_1, \text{RAZ}}(x)$.

\end{proof}




\begin{proof}[Proof of \textsc{rnd-RTO-RNE}]
Write $F_1 = \A(p_1, exp_1, b_1)$ and $F_2 = \A(p_2, exp_2, b_2)$.
Without loss of generality assume $x > 0$; the case $x < 0$ is symmetric, and
  $x = 0$ is fixed by both roundings.
If $x \in F_1$, the identity is immediate, so assume $x \notin F_1$.

Let $w_1$ and $w_2$ be the number of digits rounded to when rounding $x$ into
  $F_1$ and $F_2$.
Since $\A(p_1 + 2, exp_1 - 2, b_1) \subseteq F_2$, we have $w_1 + 2 \leq w_2$
  by \refLemKeptCmp.
Let $y_1 < x < y_2$ be the adjacent elements of $F_1$ about $x$, with midpoint
  $m = (y_1 + y_2)/2$; then $\rnd_{F_1, \text{RNE}}(x)$ is the nearer of
  $y_1, y_2$, or the even endpoint when $x = m$.
Set $z = \rnd_{F_2, \text{RTO}}(x)$ and $w = \rnd_{F_1, \text{RNE}}(z)$.
Since $w_1 + 2 \leq w_2$, \refLemRtoRounding applies: $z$ is unrepresentable in
  $F_1$ and $y_1 < z < y_2$.
The midpoint $m$ is representable at precision $w_1 + 1$, and $w_2 \geq w_1 + 2$,
  so RTO resolves $x$'s position relative to $m$: $z = m$ exactly when $x = m$,
  and otherwise the extra digit below $m$ keeps $z$ strictly on the same side of
  $m$ as $x$.
In either case,
  $z$ and $x$ round to the same $F_1$-value under RNE---the nearer
  endpoint when $x \neq m$, the even endpoint when $x = m$---so
  $w = \rnd_{F_1, \text{RNE}}(z) = \rnd_{F_1, \text{RNE}}(x)$.

\end{proof}

\subsubsection*{Overflow}
The proofs above assume no rounding overflows.
We now show that both roundings overflow together,
  completing $\langle F_1, rm_1 \rangle \rhd \langle F_2, rm_2 \rangle$.
Fix $x \in \R$ and let $y$, $z$, and $w$ be the roundings of
  $x$ into $F_1$, of $x$ into $F_2$, and of $z$ into $F_1$,
  each ignoring the bound;
  rounding directly overflows when $|y| > b_1$,
  and rounding through $F_2$ when $|z| > b_2$ or $|w| > b_1$.
Each rule has a threshold $T$, a value of $F_2$ by its containment premise,
  with two properties:
\begin{itemize}
\item if $|x| > T$, then $|y| > b_1$ and $|w| > b_1$, so both overflow;
\item if $|x| \leq T$, then $|z| \leq T \leq b_2$,
  and the rule's proof applies to $x$,
  since the premise places the values of $F_1$ up to $T$ in $F_2$;
  so $w = y$, and both overflow or neither does.
\end{itemize}
Both use that rounding is monotone and returns representable values unchanged,
  so $|x| \leq T$ implies $|z| \leq T$ for $T \in F_2$,
  and $|x| \geq T$ implies $|y| \geq T$ for $T \in \A(p_1, exp_1, \infty)$.
For \textsc{rnd-RAZ-RAZ}, $T = b_1$:
  RAZ never decreases magnitude,
  so $|x| > b_1$ gives $|y| \geq |x| > b_1$ and $|w| \geq |z| \geq |x| > b_1$.
For \textsc{rnd-RTZ-RTZ}, \textsc{rnd-RTO-RTO}, \textsc{rnd-RTO-RTZ}, and \textsc{rnd-RTO-RAZ},
  $T = \text{next}_{p_1, exp_1}(b_1)$, a value of $\A(p_1, exp_1, \infty)$ above $b_1$:
  $|x| > T$ gives $|y| \geq T$ and $|z| \geq T$, hence $|w| \geq T > b_1$.
For \textsc{rnd-RTO-RNE},
  $T = \text{next}_{p_1 + 1, exp_1 - 1}(b_1)$,
  the midpoint between $b_1$ and $\text{next}_{p_1, exp_1}(b_1)$:
  $|x| > T$ gives $|y| \geq \text{next}_{p_1, exp_1}(b_1)$;
  and $|z| > T$, since RTO never rounds an inexact value onto $T$,
  which has fewer digits than $F_2$ keeps,
  so $|w| \geq \text{next}_{p_1, exp_1}(b_1)$ as well.

\begin{proof}[Proof of \textsc{rnd-Double}]
Let $v$ be an input.
If $v$ is a special value,
  rounding through $F_2$ gives
  $\rnd^{\sigma_1, \omega_1}_{F_1, rm_1}(\sigma_2(v)) = \sigma_1(v)$,
  the direct rounding.
Otherwise $v = x \in \R$,
  and by $\langle F_1, rm_1 \rangle \rhd \langle F_2, rm_2 \rangle$
  either neither rounding overflows and the results agree,
  or both overflow and the direct rounding gives $\omega_1(\mathrm{sgn}(x))$.
In the latter case, rounding preserves sign:
  if $|z| > b_2$, rounding through $F_2$ gives
  $\rnd^{\sigma_1, \omega_1}_{F_1, rm_1}(\omega_2(\mathrm{sgn}(x))) = \omega_1(\mathrm{sgn}(x))$;
  otherwise $|w| > b_1$ and it gives $\omega_1(\mathrm{sgn}(z)) = \omega_1(\mathrm{sgn}(x))$.
\end{proof}

\subsection{Format Inference}
\label{ssec:appendix:proof:infer}

Recall that $x \in \A(p, exp, b)$ means $x$ has precision at most $p$
  (a significand of magnitude below $2^p$), is an integer multiple of
  $2^{exp}$, and satisfies $|x| \leq b$.
When an input parameter is infinite ($p = \infty$, $exp = -\infty$, or
  $b = \infty$), the matching output parameter is infinite too and the
  corresponding constraint is vacuous; we treat only the finite cases.
Finite operands have finite sums and products,
  so special values arise only from special operands;
  the sets $S_\oplus$ and $S_\otimes$ follow by cases
  on the IEEE 754 rules,
  and we prove the numerical part below.

\begin{proof}[Proof of the $\otimes$ rule]
Let $x \in \A(p_1, exp_1, b_1)$ and $y \in \A(p_2, exp_2, b_2)$.
We show $x \cdot y \in \A(p_1 + p_2,\, exp_1 + exp_2,\, b_1 \times b_2)$.

\emph{Precision.}
The precision constraints give representations
  $x = s_1 \times 2^{e_1}$ and $y = s_2 \times 2^{e_2}$
  with $|s_1| < 2^{p_1}$ and $|s_2| < 2^{p_2}$.
Then
\[
  x \cdot y = (s_1 s_2) \times 2^{e_1 + e_2},
  \qquad
  |s_1 s_2| = |s_1|\,|s_2| < 2^{p_1} \cdot 2^{p_2} = 2^{p_1 + p_2},
\]
so $x \cdot y$ has precision at most $p_1 + p_2$.

\emph{Quantum.}
The quantum constraints give $x = c_1 \times 2^{exp_1}$ and
  $y = c_2 \times 2^{exp_2}$ for integers $c_1, c_2$.
Then $x \cdot y = (c_1 c_2) \times 2^{exp_1 + exp_2}$ is an integer multiple
  of $2^{exp_1 + exp_2}$.

\emph{Bound.}
From $|x| \leq b_1$ and $|y| \leq b_2$ we get
  $|x \cdot y| = |x|\,|y| \leq b_1 b_2$.
\end{proof}

\begin{proof}[Proof of the $\oplus$ rule]
Let $x \in \A(p_1, exp_1, b_1)$ and $y \in \A(p_2, exp_2, b_2)$.
We show $x + y \in \A(P,\, m,\, b_1 + b_2)$, where
\[
  m = \text{min}(exp_1, exp_2),
  \qquad
  N = \lfloor (b_1 + b_2)/2^{m} \rfloor,
  \qquad
  P = \max\!\big(1, \lceil \log_2(N + 1) \rceil\big).
\]

\emph{Quantum.}
Since $exp_1, exp_2 \geq m$, the quantum constraints
  $x = c_1 \times 2^{exp_1}$ and $y = c_2 \times 2^{exp_2}$
  rewrite at the finer quantum $2^m$ as
\[
  x = (c_1 \times 2^{exp_1 - m}) \times 2^m,
  \qquad
  y = (c_2 \times 2^{exp_2 - m}) \times 2^m,
\]
with integer coefficients (the exponents $exp_i - m$ are nonnegative).
Hence $x + y = c \times 2^m$ where
  $c = c_1 \times 2^{exp_1 - m} + c_2 \times 2^{exp_2 - m}$
  is an integer, so $x + y$ is an integer multiple of $2^m$.

\emph{Bound.}
$|x + y| \leq |x| + |y| \leq b_1 + b_2$.

\emph{Precision.}
Reusing the representation $x + y = c \times 2^m$ from the quantum step,
  the bound step gives
\[
  |c| \times 2^m = |x + y| \leq b_1 + b_2,
  \qquad\text{so}\qquad
  |c| \leq (b_1 + b_2)/2^m.
\]
As $c$ is an integer, in fact $|c| \leq \lfloor (b_1 + b_2)/2^m \rfloor = N$.
By definition of the ceiling logarithm, $N + 1 \leq 2^{\lceil \log_2(N+1) \rceil}
  \leq 2^{P}$, so
\[
  |c| \leq N < N + 1 \leq 2^{P}.
\]
Thus $x + y = c \times 2^m$ with $|c| < 2^{P}$; that is, $x + y$ has precision
  at most $P$.
\end{proof}

\fi

\end{document}